\newtheorem{remark}{Remark}
\newtheorem{theorem}{Theorem}
\newtheorem{lemma}{Lemma}
\newtheorem{assumption}{Assumption}
\begin{document}
\title{System Identification with Variance Minimization via Input Design}
\author{Xiangyu Mao$^\dag$,\IEEEmembership{Member, IEEE}, Jianping He$^\dag$,\IEEEmembership{Member, IEEE}, and Chengcheng Zhao$^\ddag$ ,\IEEEmembership{Member, IEEE}
	\thanks{
	 $^\dag$: The Dept. of Automation, Shanghai Jiao Tong University, and Key Laboratory of System Control and Information Processing, Ministry of Education of China, Shanghai, China. E-mail address: \{maoxy20, jphe\}@sjtu.edu.cn. 
	 $^\ddag$: The State Key Laboratory of Industrial Control Technology and Institute of Cyberspace Research, Zhejiang University, China. E-mail: zccsq90@gmail.com.
     Preliminary results have been submitted to the 2022 American Control Conference \cite{previ}.
	}%
}

\maketitle

\begin{abstract}
The subspace method is one of the mainstream system identification method of linear systems, and its basic idea is to estimate the system parameter matrices by projecting them into a subspace related to input and output. However, most of the existing subspace methods cannot have the statistic performance guaranteed since the lack of closed-form expression of the estimation.
Meanwhile, traditional subspace methods cannot deal with the uncertainty of the noise, and thus stable identification results cannot be obtained.
In this paper, we propose a novel improved subspace method from the perspective of input design, which guarantees the consistent and stable identification results with the minimum variance. Specifically, we first obtain a closed-form estimation of the system matrix, then analyze the statistic performance by deriving the maximum identification deviation.
This identification deviation maximization problem is non-convex, and is solved by splitting it into two sub-problems with the optimal solution guaranteed.
Next, an input design method is proposed to deal with the uncertainty and obtain stable identification results by minimizing the variance. This problem is formulated as a constrained min-max optimization problem. The optimal solution is obtained from transforming the cost function into a convex function while ensuring the safety constraints through the method of predictive control.
We prove the consistency and the convergence of the proposed method.
Simulation demonstrates the effectiveness of our method.
\end{abstract}

\begin{IEEEkeywords}
System identification, subspace, input design, optimization, stability.
\end{IEEEkeywords}

\IEEEpeerreviewmaketitle

\section{Introduction}
System identification refers to determining a mathematical model to describe system behavior according to the observed data and prior system knowledge \cite{040}.
The process of system identification can be seen as a systematic procedure that builds mathematical models from observation, which plays the vital role of the interface between the real world of applications and the mathematical world of control theory and model abstractions \cite{LJUNG20101}.
System identification helps to describe the system dynamical properties and brings fundamental knowledge for further research such as prediction and control.
The identification model can be used to perform parameter correction and real-time adaptive control of the high-precision system \cite{7394584}.
According to the predicted model quantities by filtering and identifying acquired measurements, the attack or abduction of robot network can be realized \cite{8951600,7057677}.
Besides, system identification is widely used in large-scale network systems and nonlinear systems \cite{8897147} because it is usually not feasible to obtain the large-scale network model directly, especially when the network is distributed \cite{8357807}.

In this paper, we consider the identification of linear systems, which is a parameter estimation problem of gray-box models \cite{ASTROM1971123}.
In the literature, the mainstream methods for system identification fall into two classes, the prediction error method (PEM) and the subspace method \cite{030}.
The PEM estimates the system matrices to predict the output by input and historical output data and aims to minimize a cost function related to the prediction error \cite{032}.
The PEM has the best possible asymptotic accuracy \cite{002} and the predictions are useful in model-based control methodologies such as predictive control.
For subspace method, it is to identify a similarity transformation of the system by projecting the system parameter matrices into a subspace related to input and output \cite{036}. 
The subspace method has the advantages of noniterative solutions for the state space form of the systems.
These solutions are convenient for estimation, filtering, prediction and control \cite{034}.

\subsection{Motivation}
Despite the prominent contributions of the pioneering works, there still remain some notable issues.
First, the traditional system identification methods are with high computational complexity, which makes them hard to be used for real-time applications.
In detail, the PEM often leads to a non-convex, multi-dimensional and nonlinear optimization problem \cite{031}, and the numerical, iterative procedures for solving such problems are guaranteed only to find local minima \cite{034}. 
Another traditional identification method, the subspace method has more reliable noniterative numerical solutions.
However, the linear algebraic steps such as the matrix factorization applied in the subspace method do not provide a cost function like the PEM, making the statistic analysis of the subspace method much more difficult. 
Although instrumental variable is a feasible tool to simplify the subspace method and the error analysis \cite{004}, the use of instrumental variables reduces the accuracy \cite{021}.
Second, traditional identification methods are difficult to effectively deal with the uncertainty of the system to obtain stable and robust identification results \cite{007}.
This phenomenon is more obvious when the amount of data is small, where both the prediction by the PEM and the identification result of the subspace method tend to have a large variance.
Third, the input signal of the system needs to be designed, which plays a vital role in the process of identification because it directly affects the output and the identification results.
However, 
for traditional identification methods, the excitation signal of the system is mainly white noise \cite{002,004}, which may result in unstable identification results of system matrices with a large variance.
Optimal input design for the subspace method remains an open issue \cite{034}.

\begin{figure*}[t]
  \centering 
  \setlength{\abovecaptionskip}{0.1cm}
    \label{sae0}
    \includegraphics[width=0.75\textwidth]{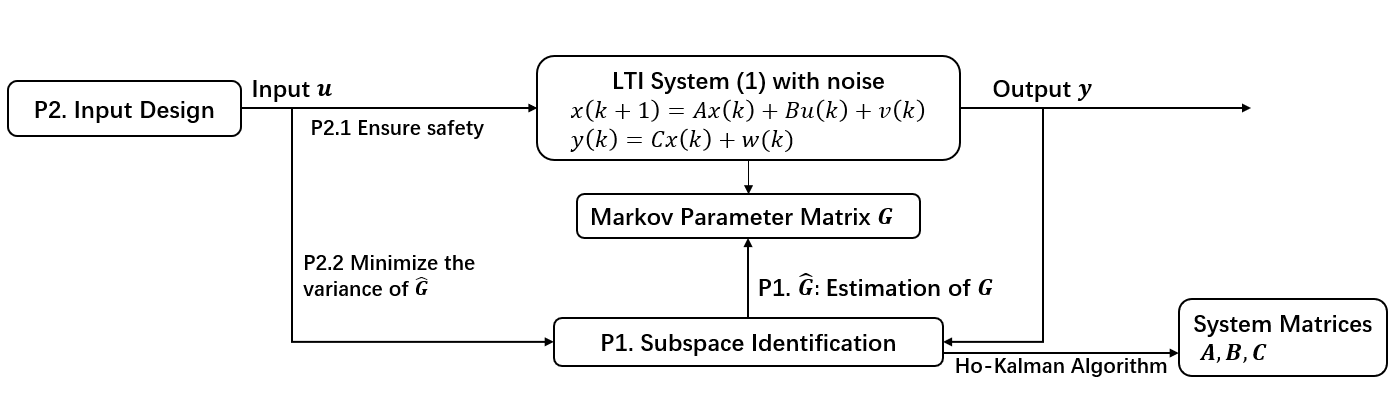} 
  \caption{Framework of this paper.} 
  \vspace*{-10pt}
\end{figure*}

The above issues have motivated the study of this paper.
We aim to propose an improved subspace method and deal with the uncertainty by designing the input to minimize the identification variance. 
Different from traditional subspace methods which derive an implicit solution to the system matrices through linear algebraic steps, our method has a closed-form consistent solution.
Under the condition that the system input signal can be designed, we obtain the optimal input to achieve stable identification results. 
Furthermore, the proposed method does not require strong assumptions including system stability or zero initial states \cite{029}, which narrow the scope of the application of the identification method.

\subsection{Contribution}
To start with, we analyze a linear time-invariant (LTI) system and propose a subspace method to identify the Markov parameter matrix.
The Markov parameter matrix is obtained by a closed-form expression related to the input and output. 
Next, we analyze the statistic property and obtain the maximum identification deviation of the proposed subspace method.
Finally, the optimal input signal for minimizing the identification variance is obtained to account for uncertainties associated to the noise.
We propose an input design algorithm based on the statistic property analysis to minimize the maximum identification deviation, which is provably equivalent to minimize the variance. 
We analyzed the performance of the algorithm.
It is proved that the proposed input design method has a faster convergence rate of identification error. 
More importantly, stability of identification can be achieved by the proposed method.
It shows that the maximum deviation of identification results of the proposed method converges at a speed of $\mathcal{O}(N^{-1})$, while using white noise input can only converge in probability at the speed of $\mathcal{O}(N^{-\frac{1}{2}})$.

The differences between this paper and its conference version \cite{previ} include i) optimal input design problem is solved rigorously with detailed procedure and analytical solutions provided, ii) the performance analysis of the proposed algorithm are provided, especially the error analysis and the deviation analysis, iii) extended simulations are provided.

The main contributions are summarized as follows. 
\begin{itemize}
\item We propose an improved subspace system identification method with a closed-form, consistent estimation of the system matrix which helps to tackle the difficulty of statistic analysis of subspace identification.
The method avoids the usage of instrumental variables or the requirement of system stability or zero initial states. 
Then, the maximum identification deviation and the upper bound of the identification error are provided.
                                  
\item  An input design algorithm is proposed to deal with the uncertainty in system identification while ensuring the safety constraints.
The proposed algorithm enables the observer to obtain stable identification results, which have the minimum identification variance.
Simulations demonstrate that compared with the commonly-used white noise input, the accuracy and stability of identification under our algorithm have a considerable improvement. 

\item We solve the maximum identification deviation problem and input design problem, and provide the optimal solution of these two problems, respectively.
Both of them are non-convex optimization problems. 
The problem is solved by splitting it into two quadratic programming sub-problems and relaxation with the optimal guaranteed. 
The input design problem is formulated as a constrained min-max problem.
We transformed the cost function into a convex function and obtain the optimal solution by gradient descent while ensuring the safety constraints through predictive control.
 
\end{itemize}

This paper provides deeper insights into the system identification and the input design of system. 
The theoretical results can serve as instructions to design the optimal control  for the sake of accuracy and stability of  identification, and also beckon further research to explore more advanced methods for general system models such as nonlinear systems.

The remainder of this paper is organized as follows. 
Section \ref{Relatedwork} provides literature research of system identification and input design.
Section \ref{preliminary} gives the notations and describes the problem of interest. 
Section \ref{subspace} proposes a subspace identification method and then analyzes the maximum identification deviation of the proposed subspace method. 
The input design algorithm to minimize the identification variance or the maximum deviation is given in Section \ref{inputdesignsection}. 
Simulation results are shown in Section \ref{Sim}, followed by conclusions and future directions in Section \ref{conclusionfi}.

\section{Related Work}\label{Relatedwork}
There have been extensive researches on system identification and input design in the literature. This section gives a brief overview of the PEM method, the subspace method, and input design in system identification.

\subsubsection{The Prediction Error Method}
The PEM aims to estimate the system matrices in a predictor $\hat{y}$ of output $y$ such that $\hat{y}$ approximates $y$ \cite{034}.
The PEM obtains minimum-variance estimates of the dynamics both of the deterministic and of the stochastic part of the system and has excellent statistic properties. 
Since the prediction-error cost function of the PEM is mostly non-convex and complicated,
it is necessary to simplify the objective function or constraints of the PEM.
Using the low-rank nature of the block Hankel matrix for decomposition \cite{001}, fitting the input signal by a polynomial \cite{006} or simplifying the original problem by L1 regularization and LASSO \cite{005} are efficient ways.
Another open issue is how to guarantee the optimal prediction in the simplified or relaxed PEM and it remains a relevant research topic.

\subsubsection{The Subspace Method}
The key idea of the subspace method is that certain subspaces related to the system matrices can be retrieved via linear-algebra steps such as RQ factorization, when storing the input and output data.
Compared with the PEM, the subspace method has more reliable noniterative numerical solutions because it does not need to solve a complex prediction-error minimization problem or parameterize the system model \cite{034}.
During the process of subspace identification, the system state can be estimated simultaneously \cite{013}, and the observer can continuously update the identification results. 
Thus, the subspace method has been widely applied in the identification of large-scale network systems \cite{003,016,012} or other complex models represented by the glucose-insulin model \cite{015}. 
To make the subspace method easier to implement,  in \cite{002}, kernel norm is used as the relaxation of the rank constraints which are commonly used. In \cite{017}, the Sylvester equation is adopted, combined with the traditional subspace method. 
Since the subspace method uses linear algebra steps such as RQ decomposition and SVD decomposition, which have no explicit expressions, it is difficult to obtain a closed-form solution of the identification result.
Consequently, lack of a closed-form solution of identification results makes it difficult for statistic analysis in subspace methods.

\subsubsection{Input Design in System Identification}
There have been research works which investigate the input design in the PEM \cite{025,024,041} based on optimization theories.
The input design problem of minimizing the maximum error of identification is discussed under a finite-impulse response (FIR) model based on the PEM \cite{023,027}. 
Input design for minimizing the variance in system identification mainly considers maximizing the Fisher information of the system, while this method is only applicable to the PEM \cite{026}. 
For more general methods in system identification, input design is considered for the output error \cite{STOJANOVIC2016576}, control \cite{MU2018327} or the maximum information\cite{FUJIMOTO201837}.
However, few researches focus on input design problem in the subspace method, especially for the stability of identification, because of the lack of the closed-form solution.

Different from the aforementioned works, as shown in Table 1, this paper focuses on subspace identification and input design for stability. We use the designed input, not the white-noise  to minimize the identification variance according to the closed-form identification, so that the identification result can remain stable when the system is with noise. We give the convergence analysis of the identification error and variance.

\begin{figure*}[htbp]
  \begin{center}
  \setlength{\abovecaptionskip}{0.1cm}
    \label{theorems0}
    \includegraphics[width=0.65\textwidth]{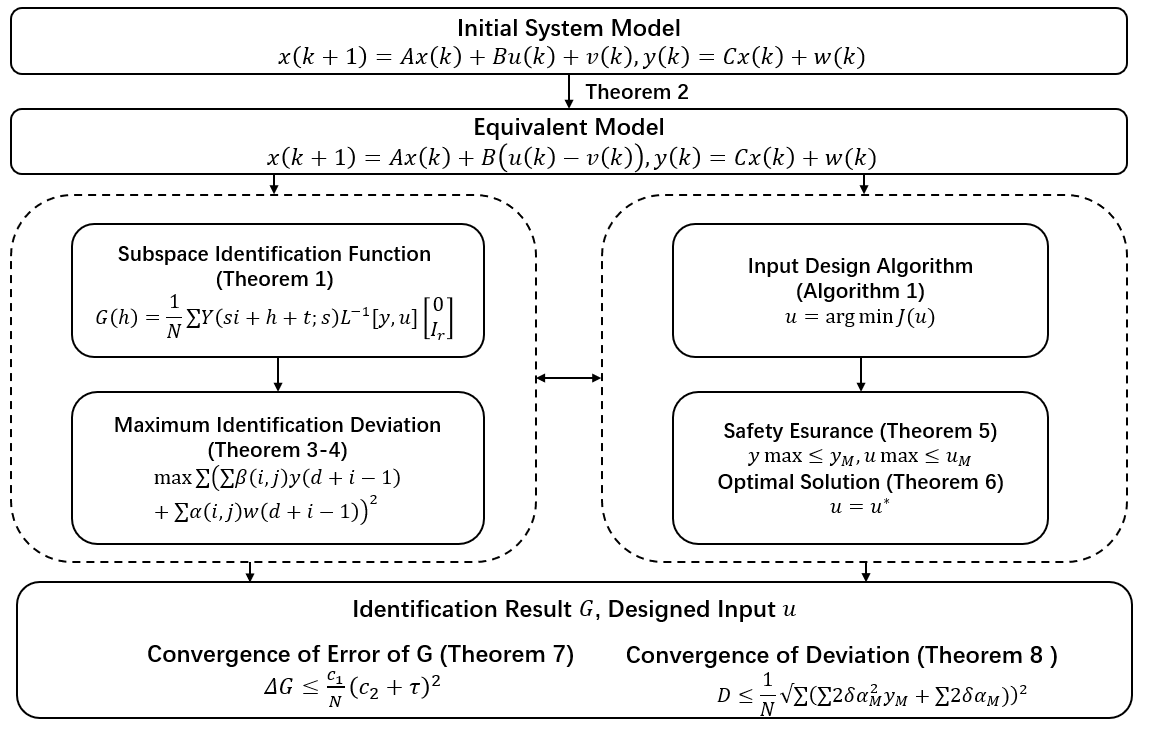} 
  \caption{Roadmap of the main theoretical results in this paper.} 
  \end{center}
  \vspace*{-10pt}
\end{figure*}

\section{Preliminaries and Problem Formulation}\label{preliminary}

\subsection{Basic Model and Problem Formulation}

We investigate the state-space model of a discrete-time LTI system, defined by
\begin{equation}
\begin{array}{ll}
\left\{\begin{array}{l}\label{sysmodel0}
x(k+1)=A x(k)+Bu(k)+v(k), \\
y(k)=C x(k)+w(k),
\end{array}\right.
\end{array}
\end{equation}
where $x(k)\in\mathbb{R}^m$ is the state variable, $y(k)\in\mathbb{R}^n$ is the output signal, $u(k)\in\mathbb{R}^p$ is the input signal, $v(k)\in \mathbb{R}^m, w(k)\in\mathbb{R}^n$ are the process noise and output noise, $A\in\mathbb{R}^{m \times m}, B\in\mathbb{R}^{m \times p}, C\in   \mathbb{R}^{n \times m}$ are system matrices.

The work of this paper is divided into two steps, subspace identification and input design.  
Subspace identification is to estimate the system matrices $A,B,C$ via the observation of system input $u$ and output $y$. 
Then, the input design mechanism is proposed to ensure the stability of identification and the safety of the system, i.e., designing the input $u$ to minimize the estimation variance of $A,B,C$ when there exists noise $v,w$ and ensure that $y$ is in safety range.

Define the extended Markov parameter matrix $G(t)$ by
\begin{equation}\nonumber
G(t) = \left[ CA^{t-1}B, \ \ CA^{t-2}B,\  \ \cdots,\ \  CAB, \ \ CB\right].
\end{equation}
It follows that $G$ can be directly derived by $A,B,C$.
On the other hand, the system matrices $A$, $B$, $C$ can be obtained up to a similarity transformation form from $G$ using the Ho-Kalman Algorithm.
Hence, subspace identification process can be transformed into the problem of identifying $G$.
Moreover, the subspace method only identifies a similar transformation of the system \cite{036} and $G$ is invariant under the similarity transformation. 
Therefore, we choose the Markov parameter matrix $G$ as the goal of subspace identification.

Therefore, for any LTI system, we identify the system via estimating the Markov parameter matrix $G$ by the input and output. 
The input is designed aimed at ensuring safety and minimizing the variance of estimation of $G$. The system matrices $A,B,C$ are derived by the Ho-Kalman Algorithm from $G$. The framework of this paper is shown in Fig.1. 

Both subspace identification and input design are online schemes. 
The detailed formulation is described as follows.
\begin{itemize}
\item \textbf{Subspace identification}. Estimate $G$ by a closed-form function of the input $u$ and output $y$, i.e., $\hat{G}= f\left(u, y\right)$.
For the sake of statistical analysis and dealing with uncertainties of noise, the derived estimation function \textbf{P1} is supposed to consider the influence of noise,
which is formulated by
\begin{equation}\nonumber
   \textbf{P1:} \ \hat{G}= g\left(u^*, y^*,v,w\right),
\end{equation}
where $y^*$ is the true value of the output and $u^*$ is the equivalent input signal considering the influence of the noise $v,w$ and the input signal $u$ on the system, which is explained in detail in Section \ref{subspace}. 

\item \textbf{Input design}. 
We obtain the optimal input $u$ to minimize the variance of the estimation of $G$ in the subspace method, which is proved (in Section \ref{subspace}) to be equivalent to minimizing the maximum deviation of identification results. 
The problem of the optimal input design is formulated as the following min-max problem.
\begin{equation}\label{min-max0}
\begin{aligned}
\textbf{P2:} \  \min _{u(k_1;k_2)} &\max _{v_i, w_i,v_j,w_j}  \left\|\hat{G}_{i}-\hat{G}_{j}\right\|_{\mathrm{F}} \\
\text { s.t. }  & \hat{G}_{\ell}=g\left(u^*, y^*, v_\ell, w_\ell\right),   \ell = i,j;\\  
& \left\|v\right\|_{\infty} \leqslant v_\mathrm{M}, \left\|w\right\|_{\infty} \leqslant w_\mathrm{M},
\end{aligned}
\end{equation}
where $k_1$, $k_2$ are the start and end time of input design, and $v_\mathrm{M}, w_\mathrm{M}$ are the bound of noise.
We assume that the noise $v$ and $w$ are bounded random variables in this paper.
$\hat{G}_{i}$ or $\hat{G}_{j}$ refers to the identification result under any possible combination $\{v_i, w_i\}$ or $\{v_j, w_j\}$.
\end{itemize}

\begin{remark}
Since the noises $v,w$ are random variables, the identification result $\hat{G}$ is a random variable. 
However, if the bound of the noise $v,w$ is known, the possible value of $\hat{G}$ can be obtained, whose distribution is related to input, output and noise.
The aim of this paper is to minimize the variance of $\hat{G}$ by input design, which provides a more stable identification result when the noise is unknown.
\end{remark}

\subsection{Notations and Assumptions}

In this paper, the lower-case letters $\{x,y,u,v,w\}$ represent vectors and the upper-case letters $\{X,Y,U,V,W\}$ or $\{\textsf{X, Y, U, V, W}\}$ represent the matrices constructed by the respective vectors.
Let $\operatorname{rank}(A)$ be the rank of matrix $A$, $\operatorname{Tr}(A)$ be the trace of matrix $A$,  $A^{\mathrm{L}}$ and $A^{\mathrm{R}}$ be the generalized left and right inverse matrix of matrix $A$ and $A^\dagger$ be the Moore-Penrose inverse matrix of $A$, respectively. 

Define $X(k;h)$ as a transpose vector sequence from $x(k)$ to $x(k+h-1)$, and $\textsf{X}(k;s)$ as the matrix formed by the row arrangement from $x(k)$ to $x(k+s-1)$, i.e., 
\begin{equation}\nonumber
\begin{aligned}
X(k;h) =& [x^{\top}(k),\ x^{\top}(k+1),\  \cdots, \  x^{\top}(k+h-1)]^{\top}, \\
\textsf{X}(k;s) =& [x(k), \ \ \ x(k+1),\ \ \ \cdots,\ \ x(k+s-1)].
\end{aligned}
\end{equation}
A block Hankel matrix formed by vectors from $x(k)$ to $x(k\!+\!h\!+\!s\!-\!2)$ is defined by
\begin{equation}\nonumber
\begin{small}
\mathcal H_{x}(k;h;s)\!=\!\left[\!\begin{array}{cccc}
x(k)\! & x(k\!+\!1) \! \!& \!\cdots\! \! &\! x(k\!+\!s\!-\!1) \!\\
x(k+1)\! & x(k\!+\!2) \! \!& \!\cdots\! \! & \!x(k\!+\!s) \!\\
\vdots\! & \vdots\! \! & \!\ddots\! \! & \!\vdots \!\\
x(k\!+\!h\!-\!1)\! & x(k\!+\!h)\! \! & \!\cdots\! \! &\! x(k\!+\!h\!+\!s\!-\!2)\! 
\end{array}\!\right]\!,
\end{small}
\end{equation}
where $h$ and $s$ determine the dimension of the block Hankel matrix. Then, the matrix composed of the block Hankel matrix $\mathcal H_{y}(k;h;s)$ and $\mathcal H_{u}(k;h+t;s)$ is denoted by
\begin{equation}\nonumber
\mathcal L[y,u] =  \left[\!\begin{array}{l}
\mathcal H_{y}(k;h;s) \\
\mathcal H_{u}(k;h\!+\!t;s)
\end{array}\!\right].
\end{equation}

For an integer $h \geqslant m$, the extended observability matrix $O_c$ and the  extended  controllability matrix $O_b$ are
\begin{equation}\nonumber
\begin{aligned}
&O_{c}(h)=\left[C^{\top}, \   \ (C A)^{\top},  \ \cdots,  \ (C A^{h-2})^{\top},\   (C A^{h-1})^{\top}\right]^{\top}, \\
&O_{b}(h)=\left[A^{h-1} B,\ \ A^{h-2} B,\ \  \cdots,\ \ \ A B,\ \ \ B\right].
\end{aligned}
\end{equation}
A system transformation matrix $T(h)$ is defined as
\begin{equation}\nonumber
\begin{small}
T(h) =\left[
\begin{array}{ccccc}
     0  \\
    CB & 0\\
    CAB &  CB & 0 \\
    \vdots & \vdots & \vdots & \ddots \\
    CA^{h-2}B & CA^{h-3}B & ... & CB & 0
\end{array}\right].
\end{small}
\end{equation}

Finally, the infinite-norm of a matrix $(\cdot)$ is denoted by $\|(\cdot)\|_{\infty} $, and the Frobenius norm is denoted by $\|(\cdot)\|_{\mathrm{F}} $.



\begin{table*}[htbp]
    \centering 
    \caption{A summary of system identification and input design methods}
    \begin{threeparttable}
    \begin{tabular}{ccccccc}
    \toprule
        Paper & Model & Stability & Method & Inputs $u$ & Bias\tnote{[1]} & Variance\tnote{[2]} \\ 
    \midrule
       The Proposed & MIMO & Any & Subspace\tnote{[3]} & Designed & $\mathcal{O}(N^{-\frac{1}{2}})$ & $\mathcal{O}(N^{-1})$\\
       COSMOS \cite{002} & MIMO & Any & Subspace & Gaussian & $\mathcal{O}(N^{-\frac{1}{2}})$ & -  \\
       Zheng \textit{et al}.\cite{022} & MIMO & Any & Subspace & Gaussian & $\mathcal{O}(N^{-\frac{1}{2}})$ & -  \\       
       Oymak \textit{et al}.\cite{8814438} & MIMO & Stable & Subspace & Gaussian & $\mathcal{O}(N^{-\frac{1}{2}})$ & -  \\
       DCP-PEM \cite{001} & Noise-free MIMO & Stable & PEM  & Gaussian & $\mathcal{O}(N^{-\frac{1}{2}})$ & - \\       
       {Manchester} \cite{026} & SISO & Any & Least-squares & Designed & $\mathcal{O}(N^{-\frac{1}{2}})$ & -  \\
       {Casini} \textit{et al}.\cite{027} & FIR SISO & Stable & Least-squares & Designed \tnote{[4]} & $\mathcal{O}(N^{-\frac{1}{2}})$ & $\mathcal{O}(c^{\frac{1}{2^N}})$  \\       
    \bottomrule
    \end{tabular}

    \begin{tablenotes}
        \footnotesize
         \item [1] This column shows the convergence of the error between the identification result and the true value as the amount of data $N$ increases.
        \item [2] This column shows the convergence of the variance of the identification results due to the noise. 
         \item [3] It is an improved method with closed-form estimation of the Markov parameter matrix.  
         \item [4] This work designs the input based on SISO FIR model with binary valued measurements.
      \end{tablenotes}
   \end{threeparttable}
    
   \label{algorithmstable}
    
\end{table*}

The following assumptions are made throughout the paper.\par
\begin{assumption}
System \eqref{sysmodel0} is minimal, and the system order is known.
\end{assumption}
\begin{assumption}
The noise $v,w$ are zero-mean white noises which are independent of the system.
\end{assumption}
\begin{assumption}
The input, output, and noise are bounded. The signal-to-noise ratio of the system is large enough to ignore the influence of the quadratic term of the system noise on the system compared with variables $x, y$ or $u$.
\end{assumption}
Assumptions 1 and 2 are basic guarantees for the feasibility of identification. Since the effect of noise on the system is generally minor, Assumption 3 is reasonable.
These assumptions do not require the stability or a zero initial state, which makes the identification method more general.

\section{Subspace identification and Maximum Identification Deviation}
\label{subspace}
In this section, we first propose a method for estimating the  Markov parameter matrix $G$ to solve \textbf{P1}. 
Then, we analyze the statistic performance of the proposed subspace method. 
We transform the process noise $v$ in system \eqref{sysmodel0} into input noise  for convenience of analysis and obtain the maximum deviation between the identification results.
Finally, we prove that minimizing the maximum identification deviation is equivalent to minimizing the identification variance, which paves the way for the input design to solve \textbf{P2} in next section .

\subsection{Subspace Identification of the Markov Parameter Matrix}


In this subsection, a subspace method is proposed  by  constructing  a  block  Hankel  matrix related to the input and output and eliminating  the system state $x$ without using instrumental variables. 
First, we derive the following lemma.

\begin{lemma}
The block Hankel matrix $\mathcal L[y,u]$, which is composed of $\mathcal H_{y}(k;h;s)$ and $\mathcal H_{u}(k;h+t;s)$, is nonsingular with probability 1.
\end{lemma}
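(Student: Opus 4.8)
The plan is to establish nonsingularity by a genericity (measure-zero) argument: I will show that $\det\mathcal{L}[y,u]$, regarded as a function of the random quantities entering the data, is a real-analytic (indeed polynomial) function that is not identically zero, and then invoke the fact that the zero set of such a function has Lebesgue measure zero. Since the quantities that randomize the data --- the process and output noises $v,w$, together with the initial state and input --- are continuously distributed, the event $\{\det\mathcal{L}[y,u]=0\}$ will then carry probability zero.

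First I would pin down the dimensions and confirm squareness: stacking the $hn$ output rows of $\mathcal{H}_y(k;h;s)$ on top of the $(h+t)p$ input rows of $\mathcal{H}_u(k;h+t;s)$ yields an $(hn+(h+t)p)\times s$ matrix, which is square exactly when $s=hn+(h+t)p$; this is the regime in which nonsingularity is meaningful. Next I would write out each column explicitly. Unrolling the state recursion in \eqref{sysmodel0}, every output sample $y(k+i+j)$ is an affine function of the initial state $x(k+i)$, of the inputs $u(\cdot)$, of the process noises $v(\cdot)$, and of the output noise $w(k+i+j)$. Hence every entry of $\mathcal{L}[y,u]$ is affine in the stacked random vector $\theta=(x(k),\{u(j)\},\{v(j)\},\{w(j)\})$, so that $\det\mathcal{L}[y,u]=P(\theta)$ for a fixed polynomial $P$ whose coefficients depend only on the (fixed) system matrices $A,B,C$.

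The decisive step, and the one I expect to be the main obstacle, is to prove $P\not\equiv 0$, i.e. to exhibit a single realization $\theta^{\ast}$ making $\mathcal{L}[y,u]$ nonsingular. Here I would invoke Assumption 1 (minimality). It suffices to render the input rows linearly independent and then let the output rows supply the remaining directions: by controllability I can pick a persistently exciting input (for instance suitably placed impulses) so that $\mathcal{H}_u$ has independent rows and the driven state trajectory spans $\mathbb{R}^m$ over the window, while by observability the extended observability matrix $O_c(h)$ (full column rank for $h\geq m$) guarantees that the output rows contribute exactly the directions not already accounted for by the input rows. Setting $v=0$ and choosing $x(k)$ and $w$ generically, one verifies that the $s$ columns are linearly independent, hence $P(\theta^{\ast})\neq 0$. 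The delicate bookkeeping lies precisely in tracking the interplay between the persistently exciting input block and the controllability/observability supplied by minimality, so that the input and output rows together account for all $s$ independent directions without overlap.

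Finally, with $P\not\equiv 0$ in hand, I would conclude by the standard fact that a nontrivial real-analytic function vanishes only on a set of Lebesgue measure zero. Because $\theta$ has an absolutely continuous distribution (the noises, and the input/initial state, being continuous random variables), it follows that $\Pr[\det\mathcal{L}[y,u]=0]=\Pr[P(\theta)=0]=0$, i.e. $\mathcal{L}[y,u]$ is nonsingular with probability $1$, as claimed.
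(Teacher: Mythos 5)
Your proposal is correct in outline and rests on the same underlying genericity idea as the paper, but it takes a genuinely different route. The paper argues by contradiction on the rows of $\mathcal H_{y}(k;h;s)$: it posits a linear dependence among the rows, substitutes the resulting relations into one another $(s-1)$ times to compress them into a single affine relation involving only $2h-2$ of the $(h+s-1)$ independent noise-carrying samples, and concludes that the singular matrices form a null set (a ``zero test set'') within the matrix space. You instead regard $\det\mathcal L[y,u]$ as a fixed polynomial $P(\theta)$ in the stacked randomness $\theta$, reduce everything to exhibiting a single realization with $P(\theta^{*})\neq 0$, and invoke the measure-zero property of the zero set of a nontrivial polynomial. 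Your packaging is the more standard one and, in principle, the more complete: it treats the full square matrix at once, whereas the paper's counting argument only addresses dependence \emph{internal} to the $\mathcal H_{y}$ block and never explicitly handles the $u$-rows or cross-block dependencies between the $y$-rows and $u$-rows. The trade-off is that your ``decisive step''---the witness construction via minimality and a persistently exciting input---is exactly the part you leave sketched, and it is not vacuous: if $\mathcal H_{u}(k;h+t;s)$ is row-rank deficient (e.g.\ $u\equiv 0$), then $\mathcal L[y,u]$ is singular surely and no output-noise randomness can repair the input block, so the lemma implicitly requires an exciting input; your witness should be completed by fixing any input whose Hankel block has full row rank and using the additive output-noise Hankel block to supply the remaining $hn$ directions. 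One further caveat: your concluding step assumes $\theta$---including the input and initial state---is absolutely continuous, but in this paper the input is \emph{designed}, hence deterministic given the past; the conclusion survives because the $w$-entries alone provide enough absolutely continuous randomness once the input rows are independent, a point worth stating explicitly (the paper's own proof, which invokes only the output-noise randomness, glosses over the same issue).
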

\begin{proof}
Please see Appendix A.
\end{proof}

We start with the identification of a simple condition where the system is noise-free.
We have the following theorem.
\begin{theorem}\label{theMarkov}
Assume that $v = 0, w = 0$, then the Markov parameter matrix $ G $ of system \eqref{sysmodel0} is given by 
\begin{equation}\label{noisefreeidentify}
G(t) =\textsf{Y}(k\!+\!h\!+\!t;s)\mathcal L^{-1}[y,u]\!\left[\!\begin{array}{l}
0 \\
I_{r}
\end{array}\!\right],
\end{equation}
where $s = h\cdot n + (h+t)\cdot p$ and $r = h\cdot n \cdot p$.
\end{theorem}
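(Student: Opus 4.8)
The plan is to turn the state-space recursion into a single linear relation that expresses the shifted output block row $\textsf{Y}(k+h+t;s)$ as one constant matrix times the data matrix $\mathcal L[y,u]$, and then to read off the Markov parameters as a designated sub-block of that constant matrix. First I would unroll the state equation (with $v=0,w=0$) to obtain, for every shift $\ell=0,\dots,s-1$,
\begin{equation}\nonumber
y(k+h+t+\ell)=CA^{h+t}x(k+\ell)+\sum_{j=0}^{h+t-1}CA^{h+t-1-j}Bu(k+\ell+j).
\end{equation}
The crucial observation is that the last $t$ summands, corresponding to $j=h,\dots,h+t-1$, equal precisely $G(t)\,[u^{\top}(k+\ell+h),\dots,u^{\top}(k+\ell+h+t-1)]^{\top}$; that is, the future inputs are multiplied exactly by the target matrix $G(t)$. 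In parallel, stacking the same recursion over a window of length $h$ gives the data equation $\mathcal H_{y}(k;h;s)=O_{c}(h)\textsf{X}(k;s)+T(h)\mathcal H_{u}(k;h;s)$, whose $\ell$-th column reads $Y(k+\ell;h)=O_{c}(h)x(k+\ell)+T(h)U(k+\ell;h)$.

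Next I would eliminate the unknown state. Since the system is minimal (Assumption 1) and $h\geqslant m$, the observability matrix $O_{c}(h)$ has full column rank, so it admits a left inverse $O_{c}^{\mathrm{L}}(h)$ and $x(k+\ell)=O_{c}^{\mathrm{L}}(h)\bigl[Y(k+\ell;h)-T(h)U(k+\ell;h)\bigr]$ exactly. Substituting this into the expression for $y(k+h+t+\ell)$ shows that there is a single constant matrix $M_{0}$, independent of $\ell$, with
\begin{equation}\nonumber
y(k+h+t+\ell)=M_{0}\!\left[\!\begin{array}{l}Y(k+\ell;h)\\ U(k+\ell;h+t)\end{array}\!\right],\qquad \ell=0,\dots,s-1,
\end{equation}
equivalently $\textsf{Y}(k+h+t;s)=M_{0}\,\mathcal L[y,u]$. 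The key point in assembling $M_{0}$ is that the recovered state uses only $Y(k+\ell;h)$ together with the first $h$ input blocks $U(k+\ell;h)$, so the state term $CA^{h+t}x(k+\ell)$ contributes nothing to the columns of $M_{0}$ that multiply the $t$ trailing input blocks $u(k+\ell+h),\dots,u(k+\ell+h+t-1)$; by the first step those columns equal exactly $G(t)$.

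Finally I would invoke Lemma 1: $\mathcal L[y,u]$ is nonsingular with probability $1$, so the matrix equation $\textsf{Y}(k+h+t;s)=M\,\mathcal L[y,u]$ has the unique solution $M=\textsf{Y}(k+h+t;s)\,\mathcal L^{-1}[y,u]$. Since the explicitly constructed $M_{0}$ is one such solution, uniqueness forces $M=M_{0}$, and post-multiplying by the selection matrix $\left[\begin{smallmatrix}0\\ I_{r}\end{smallmatrix}\right]$ that picks out the trailing input-block columns extracts exactly $G(t)$, which is \eqref{noisefreeidentify}.

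The main obstacle I anticipate is purely the index bookkeeping: one must verify carefully that, after eliminating the state, the term $CA^{h+t}x(k+\ell)$ spreads only over the output rows and the first $h$ input rows and leaves the trailing $t$ input blocks untouched, so that the designated sub-block is clean. Confirming the dimension of the selection matrix and the full column rank of $O_{c}(h)$ are then the remaining routine verifications, and the noise-free hypothesis is exactly what makes the data equation hold with equality so that the representation is exact rather than approximate.
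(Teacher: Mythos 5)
Your proof is correct and takes essentially the same route as the paper's: both eliminate the unknown state via a left inverse of $O_c(h)$ (full column rank by minimality and $h\geqslant m$), arrive at $\textsf{Y}(k\!+\!h\!+\!t;s)=\left[R,\,G(t)\right]\mathcal L[y,u]$ with the trailing $t$ input blocks multiplied exactly by $G(t)$, and then invoke Lemma 1 to invert $\mathcal L[y,u]$ and extract $G(t)$ with the selection matrix. The only cosmetic difference is organizational --- the paper unrolls in two stages ($x(k+h)=A^{h}x(k)+O_b(h)U(k;h)$, then $y(k+h+t)=CA^{t}x(k+h)+G(t)U(k+h;t)$, introducing $F_1,F_2$) whereas you expand $y(k+h+t+\ell)$ from $x(k+\ell)$ in one shot --- so the content is identical.
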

The constant $k$ determines the start time of the identification.
Constants $h$ and $t$ are arbitrary, and they determine the dimension of system matrices such as $G(t), O_{c}(h)$ and $T(h)$, which means $h$ and $t$ are related to the identification scale and computational complexity.

\begin{proof}
Please see Appendix B.
\end{proof}

Then, we apply Theorem \ref{theMarkov} to the condition that $v, w \neq 0$.
Considering the computation feasibility, we divide the data into $N$ batches for identification. Each batch $i$ is a collection of input and output data at $s$ consecutive time instants with the start time $k = s \cdot i$.
We propose an identification function $f$ by directly applying Theorem  \ref{theMarkov} as follows.
\begin{itemize}
    \item \textbf{ Subspace identification estimation}
\end{itemize}
\begin{equation}\label{subspaceidentify}
\hat{G}(h) = f(u,y)\!=\frac{1}{N}\sum_{i=0}^{N}\! \textsf{Y}(si\!+\!h\!+\!t;s) \mathcal L^{\dagger}[y,u]\!\left[\!\begin{array}{l}
0 \\
I_{r}
\end{array}\!\right],
\end{equation}
where $N$ is the number of batches. 
The proof of consistency, i.e., $\hat G$ converges to the true $G^*$ is given in Theorem \ref{theorem7}.

Estimation \eqref{subspaceidentify} provides a method for identifying matrix $G$ by an expression only related to $y$ and $u$, which simplifies the process of error analysis. Then, the similarity transformation of the system parameter matrices $A,B,C$ can be obtained by the Ho-Kalman Algorithm, where a Hankel matrix $H_G$ is formed based on $G$ and $A,B,C$ are solved via the Singular Value Decomposition (SVD) of $H_G$ \cite{043}.

\subsection{Maximum Identification Deviation}
This subsection analyzes the statistic performance of the proposed subspace identification method.
We first transform the system \eqref{sysmodel0} with process noise into a system with input noise for the convenience of statistic analysis.
Then, the maximum identification deviation is investigated. The optimal solution of the maximum identification deviation is obtained, and it is shown that minimizing the variance is equivalent to minimizing the deviation.

It notes that system \eqref{sysmodel0} has process noise $v$.
The Markov parameters are difficult to identify and thus the statistic performance hard to analyze due to the process noise \cite{022}.
An effective method to tackle this difficulty is to transform the process noise $v$ into input noise which is directly related to $u$.
The transformed system is given by
\begin{equation}\label{sysmodel}
\begin{array}{l}
\left\{\begin{array}{l}
x(k\!+\!1)=A x(k)+B\left(u(k)-e(k)\right) \\
y(k)=C x(k)+w(k),
\end{array}\right. 
\end{array}
\end{equation}
where $e(k) \in   \mathbb{R}^p,w(k) \in   \mathbb{R}^m$ represent the input and output noise, respectively.  
Then, we prove that the transformation is reasonable by the following theorem.
\begin{theorem}\label{theTransformation}
Suppose Assumptions 1-3 hold, then there exists a bounded zero-mean variable $e$, so that the influence of $v$ in \eqref{sysmodel0} is equivalent to the influence of $e$ in system \eqref{sysmodel}.
\end{theorem}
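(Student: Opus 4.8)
The plan is to prove existence constructively. The process noise enters the state recursion of \eqref{sysmodel0} additively as $+v(k)$, whereas in \eqref{sysmodel} the equivalent input noise enters as $-Be(k)$; hence it suffices to exhibit a bounded, zero-mean $e$ for which $-Be(k)$ reproduces the effect of $v(k)$ on the state, and thereby on every output $y(k)$. I would therefore define $e$ by applying a generalized inverse of $B$ to $v$, namely $e(k)=-B^{\dagger}v(k)$ (equivalently $e(k)=-B^{\mathrm{R}}v(k)$ when $B$ has full row rank), so that $Be(k)=-BB^{\dagger}v(k)$. The two structural properties are then immediate: $e$ is zero-mean because $\mathbb{E}[e(k)]=-B^{\dagger}\mathbb{E}[v(k)]=0$ by Assumption 2, and $e$ is bounded because $\|e(k)\|_{\infty}\le\|B^{\dagger}\|_{\infty}\,\|v(k)\|_{\infty}\le\|B^{\dagger}\|_{\infty}\,v_{\mathrm{M}}$ by Assumption 3.

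The heart of the argument is the equivalence of influence. When $B$ admits a right inverse, $Be(k)=-v(k)$ holds exactly, so the state recursions of \eqref{sysmodel0} and \eqref{sysmodel} coincide term by term; a straightforward induction on $k$ shows the two state trajectories are identical for the same initial state, and since both share the output map $y(k)=Cx(k)+w(k)$, the outputs—and hence every quantity computed from $(u,y)$, including the estimate $\hat{G}$ in \eqref{subspaceidentify}—agree. This yields the equivalence in the strongest, path-wise sense and directly validates the transformation used in the remainder of Section \ref{subspace}.

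The main obstacle is the general case in which $B$ is not right-invertible, so that $Be(k)=-\Pi v(k)$ with $\Pi=BB^{\dagger}$ the orthogonal projector onto the range of $B$, leaving the residual $(I-\Pi)v(k)$ unmatched. Here I would argue that this residual is immaterial to the identification statistics on which the subsequent deviation and variance analysis rely: as a fixed linear image of the zero-mean, temporally white noise $v$, it contributes nothing to the mean of $\hat{G}$, and its only further effect is through second-order noise cross-terms of exactly the kind that Assumption 3 permits us to discard relative to the signal terms in $x,y,u$. Concretely, one may match the residual distributionally by choosing $\mathrm{Cov}(e)$ so that $B\,\mathrm{Cov}(e)\,B^{\top}$ reproduces $\mathrm{Cov}(v)$ along the observable directions. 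I expect the delicate step to be making this ``negligible residual'' claim precise—i.e.\ verifying that the first two moments of $\hat{G}$ under \eqref{sysmodel} with $e=-B^{\dagger}v$ coincide, to the first-order accuracy maintained throughout, with those under the original process noise—rather than the construction of $e$, which is routine.
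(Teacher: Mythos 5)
Your construction $e(k)=-B^{\dagger}v(k)$ matches the paper's only in the degenerate case where $B$ has full row rank, and your handling of the general case contains a genuine gap. The unmatched residual $(I-BB^{\dagger})v(k)$ is \emph{first order} in the noise: it enters the state, and hence every output $y(k)$, linearly. Assumption 3 only licenses discarding terms \emph{quadratic} in the noise relative to $x,y,u$; it gives no ground for dropping a linear noise component. Moreover, the whole purpose of Theorem \ref{theTransformation} in this paper is to carry the \emph{entire} first-order influence of $v$ into the input channel so that the deviation analysis of Section \ref{subspace} (the $\bm{w}\alpha$ and $y^*\bm{\beta}$ terms in \eqref{J}) and the min-max input design in \textbf{P3} account for all of the noise; discarding $(I-\Pi)v$ would make $J(u)$ and the variance bounds systematically underestimate the true deviation, and a pathwise (bounded, worst-case) representation—not merely agreement of the first two moments of $\hat{G}$—is what those arguments consume. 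Your distributional fallback also fails on its own terms: $B\,\mathrm{Cov}(e)\,B^{\top}$ has range contained in the range of $B$ and rank at most $p$, so it cannot reproduce $\mathrm{Cov}(v)$ whenever $v$ excites directions outside $\mathrm{range}(B)$ (and ``observable directions'' concern $C$, not $B$, so that restriction does not rescue the claim).

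The missing idea is to match the noise \emph{over an $m$-step window} rather than per time step. By Assumption 1 (minimality, hence controllability), the extended controllability matrix $O_b(m)=[A^{m-1}B,\ \cdots,\ AB,\ B]$ has full row rank $m$ and admits a right inverse $O_b^{\mathrm{R}}(m)$, even when $B$ itself is a tall matrix with $p<m$. Comparing the $m$-step state expansions \eqref{xkmv} and \eqref{xkme}, setting the stacked input-noise block $E(k;m)=O_b^{\mathrm{R}}(m)V(k;m)$ as in \eqref{ek} reproduces the cumulative effect of the process noise on the state exactly at the end of each window, with no residual to argue away. Boundedness and zero mean of $e$ then follow exactly as in your routine step, since $E(k;m)$ is a fixed linear image of the bounded zero-mean $V(k;m)$. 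In short: replace the single-step inverse $B^{\dagger}$ by the blockwise inverse $O_b^{\mathrm{R}}(m)$, and the ``main obstacle'' you flag disappears by controllability instead of requiring an approximation that Assumption 3 does not support.
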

\begin{proof}
Please see Appendix C.
\end{proof}
Theorem \ref{theTransformation} implies that the process noise $v$ can be transformed into the input noise $e$ in the equivalent system \eqref{sysmodel}. 

Thus, substituting $v$ by $e$, we derive the solution to \textbf{P1}.
\begin{itemize}
    \item \textbf{Solution to P1}
\end{itemize}
\begin{equation}\nonumber
     \hat{G}(h) = g\left(u^*, y^*, e, w \right)  = f(u^*+e,y^*+w),
\end{equation}
where $f$ is given in \eqref{subspaceidentify}.

Then, we analyze the statistic performance of our method. We focus on the maximum identification deviation $J(u)$, which is defined by
\begin{itemize}
    \item \textbf{Maximum identification deviation}
\end{itemize}
\begin{equation}\nonumber
    J(u)=\max _{e_i, w_i,e_j,w_j}\left\|\hat{G}_{i}-\hat{G}_{j}\right\|_{\mathrm{F}}.
\end{equation}

First, we make the following notations in this subsection.
We define $y^{*}=y-w$ and $u^{*}=u-e$, which means $y^{*}$ and $u^{*}$ are input and output signals that are not affected by uncertainty of noise.
Let $d = k\!+\!h\!+\!t$, $\textsf{W}_\ell(d;s)$ be the matrix formed by the vector from $w_\ell(d)$ to $w_\ell(d+s-1)$. Let $\mathcal L^{-1}_{\beta_\ell} = \mathcal L^{-1}[y^*\!+\!w_\ell,u^*\!+\!e_\ell]$.
Denote the elements of the $i$-th row and $j$-th column of the square matrices $\mathcal L^{-1}[y^*,u^*]$ as $\alpha(i,j)$,  $  \mathcal  L^{-1}_{\beta_\ell}$  as $\beta_\ell(i,j)$ and  $\mathcal L[w,e]$ as $p(i,j)$, respectively. 
Denote $\bm{\beta}$ as $(\beta_1 - \beta_2)$, $\bm{w}$ as $(w_1 - w_2)$ and $\bm{e}$ as $(e_1 - e_2)$. 
Define $\delta = \max\{e_M, w_M\}$.
It follows that  $\| \bm{e} \|_{\infty}\leqslant 2\delta$, $\| \bm{w}\|_{\infty} \leqslant 2\delta$. 
Define $\bm{w}^+ = \bm{w}(d+r;s-r)$ and  $\bm{w}^- = \bm{w}(1;d+r)$.
We have $\bm{w} = [\bm{w}^-;\bm{w}^+]$.

Next, we use the above notations to expand the expression of $J(u)$ and split it. We have the following theorem.
\begin{theorem} \label{theorem3}
Solving $J(u)$ is equivalent to solving the two sub-problems $J_1$ and $J_2$, where
\begin{equation}\label{J1}
J_{1} =\max _{\bm{w}^+}\sum_{j=r+1}^{s}\left(\sum_{i=r+1}^{s} \bm{w}(d+i-1) \alpha(i, j)\right)^{2},
\end{equation}
\begin{equation}\label{J2}
J_{2}=\max _{\bm{e},\bm{w}^-}\sum_{j=r+1}^{s}\left(\sum_{i=r+1}^{s} \bm{\beta}(i, j) y^*(d+i-1)\right)^{2}.
\end{equation}
\end{theorem}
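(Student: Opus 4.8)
The plan is to reduce the single worst-case problem $J(u)$ to the two stated sub-problems by writing the identification difference in closed form and then linearising it in the noise. I would start from the solution to \textbf{P1}, $\hat{G}_\ell = f(u^{*}+e_\ell,\,y^{*}+w_\ell)=\textsf{Y}_\ell\,\mathcal{L}^{-1}_{\beta_\ell}P$, where $P$ is the column selector $\bigl[\,0\,;\,I_r\,\bigr]$ appearing in \eqref{subspaceidentify} and $\textsf{Y}_\ell=\textsf{Y}(d;s)$ is evaluated at $y^{*}+w_\ell$, so that $\textsf{Y}_\ell=\textsf{Y}^{*}+\textsf{W}_\ell$. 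Forming $\hat{G}_i-\hat{G}_j$ and regrouping yields two pieces: one carried by the output prefactor, $(\textsf{W}_i\mathcal{L}^{-1}_{\beta_i}-\textsf{W}_j\mathcal{L}^{-1}_{\beta_j})P$, and one carried by the perturbed inverse, $\textsf{Y}^{*}(\mathcal{L}^{-1}_{\beta_i}-\mathcal{L}^{-1}_{\beta_j})P$.

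Next I would invoke Assumption 3 to drop the quadratic-in-noise terms. In the output-prefactor piece, since the correction $\mathcal{L}^{-1}_{\beta_\ell}-\mathcal{L}^{-1}[y^{*},u^{*}]$ is itself first order in the noise and multiplies the first-order factor $\textsf{W}_\ell$, I may replace $\mathcal{L}^{-1}_{\beta_\ell}$ by the noise-free inverse $\mathcal{L}^{-1}[y^{*},u^{*}]$ with entries $\alpha(i,j)$; this collapses the piece to $\bm{w}$ tested against $\alpha$, exactly the integrand of \eqref{J1}. The second piece retains $\bm{\beta}=\beta_1-\beta_2$ tested against the clean output $y^{*}$, the integrand of \eqref{J2}. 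Writing $\|\hat{G}_i-\hat{G}_j\|_{\mathrm F}^{2}$ entrywise then produces, column by column $j$, the squared inner sums over $i$ that appear in $J_1$ and $J_2$.

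The decisive step is the decoupling of the maximisation. Using $d=k+h+t$ and the split $\bm{w}=[\bm{w}^{-};\bm{w}^{+}]$, I would show that the prefactor piece depends only on $\bm{w}^{+}$ — the output-noise samples in the window of $\textsf{Y}(d;s)$ beyond index $r$, which do not enter the Hankel blocks of $\mathcal{L}$ — while $\bm{\beta}$, and hence the second piece, depends only on $\bm{e}$ and $\bm{w}^{-}$, the noise samples that do enter $\mathcal{L}$. Because these two groups of bounded free variables are disjoint and each sub-objective is quadratic, the joint maximisation separates into a maximisation over $\bm{w}^{+}$ (giving $J_1$) and one over $\bm{e},\bm{w}^{-}$ (giving $J_2$), which is the claimed equivalence.

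I expect the decoupling to be the main obstacle, for two reasons. First, the bookkeeping that restricts both inner sums to $i\geqslant r+1$ and confines $\bm{w}^{+}$ to $J_1$ and $\bm{w}^{-}$ to $J_2$ must be made precise by tracking exactly which noise samples appear in $\textsf{Y}(d;s)$ versus in $\mathcal{L}$. Second, both pieces are of the same order $\mathcal{O}(\delta)$, so the cross term $2\sum_j(\sum_i\bm{w}\,\alpha)(\sum_i\bm{\beta}\,y^{*})$ in the expanded squared norm is of the same order as the diagonal terms and is \emph{not} removed by Assumption 3; I would dispose of it by exploiting the sign-freedom of the bounded noise (flipping $\bm{w}^{+}$ leaves $J_1$ unchanged while reversing the cross term's sign) together with the disjointness of the variable groups, so that the worst case aligns the two pieces and $J(u)$ is recovered from the optimal values of $J_1$ and $J_2$.
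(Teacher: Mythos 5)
Your derivation up to the expansion of $J$ follows the paper's proof essentially verbatim: the paper likewise starts from the closed form \eqref{rg}, differentiates the perturbed inverse to obtain \eqref{paB}, concludes that $\beta_\ell$ is linear in the stacked noise up to quadratic terms discarded by Assumption 3 (so that products such as $w\beta_\ell$ may be dropped), and arrives at exactly your two integrands in \eqref{J}. Where you diverge is the final decoupling step, and there your argument has a genuine gap.

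Your decoupling rests on the structural claim that the samples $\bm{w}^{+}$ carried by the prefactor $\textsf{W}_\ell(d;s)$ ``do not enter the Hankel blocks of $\mathcal{L}$.'' This is false in general: $\textsf{Y}(d;s)$ covers $y(d),\dots,y(d+s-1)$ with $d=k+h+t$, while $\mathcal{H}_{y}(k;h;s)$ inside $\mathcal{L}$ covers $y(k),\dots,y(k+h+s-2)=y(d+s-t-2)$, so the window of $\bm{w}^{+}$ (starting at time $d+r$) intersects the Hankel window whenever $s\geqslant r+t+2$. With the paper's $s=hn+(h+t)p$ and $r=hnp$ this holds already in the SISO case ($s=2h+t$, $r=h$) for every $h\geqslant 2$. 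Consequently $\bm{\beta}$ does depend on part of $\bm{w}^{+}$, the two variable groups are not disjoint, and your sign-flip device collapses: negating $\bm{w}^{+}$ perturbs $\mathcal{L}_{\beta_\ell}$ and hence the $J_2$-integrand, rather than merely reversing the sign of the cross term. The paper decouples by a different (statistical, not structural) route: it invokes Assumption 2 and Theorem \ref{theTransformation} to assert that $\bm{e},\bm{w}$ are zero-mean white noises independent of the system, declares $J_1$ and $J_2$ independent problems on that basis, and discards the cross term together with the other noise-quadratic terms under Assumption 3. Your observation that the cross term is of the same order $\mathcal{O}(\delta^{2})$ as the diagonal terms — and thus not removable by Assumption 3 alone — is correct and in fact exposes a soft spot in the paper's own argument; but the repair you propose needs the disjointness that fails. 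Note also that even granting disjointness, the sign flip only yields $J\geqslant J_1+J_2$: it does not show that separate maximizers of $J_1$ and $J_2$ attain the joint maximum, since the $\bm{w}^{+}$ maximizing $\sum_j\left(A_j+B_j\right)^{2}$ need not maximize $\sum_j A_j^{2}$.
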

\begin{proof}
By \eqref{noisefreeidentify}, we have
\begin{equation}\label{rg}
\begin{small}
\begin{aligned}
 G_{\ell}  \! =\!\left[\textsf{Y}^*(d;s)\!-\!\textsf{W}_\ell(d;s)\right]  \mathcal L^{\!-\!1}[y^*\!+\!w_\ell,u^*\!+\!e_\ell]\!\left[\!\begin{array}{l}
0 \\
I_{r}
\end{array}\!\right],  \ \ell=i,j.
\end{aligned}
\end{small}
\end{equation}
Since
\begin{equation}
\frac{\partial \mathcal L^{-1}_{\beta_\ell}}{\partial p(l, t)}=  \mathcal L^{-1}_{\beta_\ell} \frac{\partial \mathcal L_{\beta_\ell}}{\partial p(l, t)} \mathcal  L^{-1}_{\beta_\ell},
\end{equation}
we have 
\begin{equation}\label{paB}
\frac{\partial \beta_\ell(i, j)}{\partial p(l, t)}= (\alpha(t, j) - p(t,j))( \alpha(i, l) - p(i,l)).
\end{equation}
Note that in the Taylor expansion of $\beta_\ell$ with respect to $p$, the $p(t,j)$ and $p(i,l)$ in the partial derivative \eqref{paB} corresponds to a quadratic term of $p$, which can be ignored by Assumption 3. Then, $\beta_\ell$ is a proportional function of $p$. Denote $\beta_\ell$ as $\epsilon p$. It follows that $w\beta_\ell$ can be ignored when $\epsilon$ is relatively small.
Hence, the expansion of $J$ is reduced to
\begin{equation}\label{J}
\begin{small}
\begin{aligned}
J^2(u)& \!=\!\max _{e_1,w_1,e_2,w_2}\! 
\sum_{j=r+1}^{s}\!\left(\! \sum_{i=r+1}^{s}\left(\beta_{1}(i, j)\!-\!\beta_{2}(i, j)\right) y^*(d+\!i\!-\!1)  \right.\\  &\left.+ \sum_{i=r+1}^{s}\left(w_{1}(d+\!i\!-\!1)-w_{2}(d+\!i\!-\!1)\right) \alpha(i, j) \right)^{2}.
\end{aligned}\end{small}
\end{equation}

By Assumption 2 and Theorem \ref{theTransformation}, $\bm{e},\bm{w}$ are zero-mean white noise independent of the system. Hence, $J_1$ and $J_2$ are two independent problems. Therefore, expanding the expression of $J$ in \eqref{J}, neglecting the quadratic term by Assumption 3, it holds that solving $J$ is equivalent to solving $J_1$ and $J_2$.
\end{proof}

From \eqref{J1}, it follows that
\begin{equation}\label{J12}
J_{1} = \max _{W} W^{\top}(d;s) H W(d;s),
\end{equation}
where $H(i, j)=\sum_{k=r+1}^{s} \alpha(k, i) \alpha(k, j).$

The maximization problem \eqref{J12} is a quadratic programming problem with a positive semi-definite Hessian matrix $H$, which makes it an NP-Hard, non-convex problem\cite{038}.

Considering that $J_{1}$ is a quadratic function with a positive first coefficient for each variable $\bm{w}$, the optimal $J_{1}$ is obtained only when all $\bm{w}$ have reached the bound $\delta$ or $-\delta$. When $s$ is relatively small, \eqref{J12} can be solved by enumeration. However, for the general case, it is necessary to perform a relaxation.

Denote $Q_W$ as $W(d;s) W^{\top}(d;s)$ and $\mathcal S$ as the set of positive semi-definite matrices with the same dimension as $Q_W$, respectively. Then, \eqref{J1} is equivalent  to
\begin{equation}\label{J1p}
J_{1}=\max _{Q_W \in \mathcal C} \operatorname{Tr}(Q_W H),
\end{equation}
where
$\mathcal C:\left\{Q: Q \leqslant 4 \delta^{2} I, Q \in \mathcal S, \operatorname{rank}(Q)=1\right\}.$
Define a relaxed convex feasible set $\mathcal{C}_{\text{relax}} : \left\{Q: Q \leqslant 4 \delta^{2}, Q \in \mathcal S \right\}$
and the relaxed problem
\begin{equation}\label{J1r}
Q^* =\arg \max _{Q \in \mathcal C_{\text{relax}}} \operatorname{Tr}(Q H).
\end{equation}
Then, the relaxed problem \eqref{J1r} can be efficiently solved by semidefinite programming \cite{026}.
\begin{theorem}\label{theorem4}
Denote $q$ as the eigenvector corresponding to the largest eigenvalue of the solution $Q^*$ of \eqref{J1r}. Then, $q$ is also the optimal solution of \eqref{J1}, i.e., $q = w^{+*}$. 
\end{theorem}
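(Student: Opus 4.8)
The plan is to prove that the semidefinite relaxation \eqref{J1r} is tight, so that its maximizer $Q^{*}$ has rank one and the direction it spans is precisely the optimizer $w^{+*}$ of \eqref{J1}. I would set this up as a sandwich argument: bound $J_{1}$ from above by the relaxed optimum, and then exhibit a rank-one feasible point of \eqref{J1} that attains this bound. A preliminary observation is that $H \succeq 0$: from $H(i,j)=\sum_{k=r+1}^{s}\alpha(k,i)\alpha(k,j)$ we have $H=\tilde{\alpha}^{\top}\tilde{\alpha}$, where $\tilde{\alpha}$ collects the entries $\alpha(k,i)$, so $\operatorname{Tr}(QH)\ge 0$ for every $Q\succeq 0$.

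The upper bound is immediate. Every feasible point of \eqref{J1} yields a rank-one matrix $Q=WW^{\top}\in\mathcal{C}$, and since $\mathcal{C}\subseteq\mathcal{C}_{\text{relax}}$ (we only discard the rank constraint), we get $J_{1}=\max_{Q\in\mathcal{C}}\operatorname{Tr}(QH)\le\max_{Q\in\mathcal{C}_{\text{relax}}}\operatorname{Tr}(QH)=\operatorname{Tr}(Q^{*}H)$, so the relaxed value dominates.

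The core of the argument is to show that the maximizer $Q^{*}$ may be taken to be rank one. Since $\operatorname{Tr}(QH)$ is linear in $Q$ and $\mathcal{C}_{\text{relax}}$ is compact and convex, the maximum is attained at an extreme point. Diagonalizing $H=\sum_{k}\mu_{k} v_{k} v_{k}^{\top}$ with $\mu_{1}\ge\mu_{2}\ge\cdots\ge 0$, I would write $\operatorname{Tr}(QH)=\sum_{k}\mu_{k}\, v_{k}^{\top}Qv_{k}$ and argue that maximizing this over the admissible matrices concentrates the available weight of $Q$ onto the leading eigendirection $v_{1}$, so that (when $\mu_{1}$ is simple) $Q^{*}=c\,v_{1} v_{1}^{\top}$ for the scalar $c$ that saturates the constraint. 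Its top eigenvector is then $q$, proportional to $v_{1}$. Rescaling $q$ to meet the bound gives a point feasible for \eqref{J1} whose objective value is $q^{\top}Hq=\operatorname{Tr}(Q^{*}H)\ge J_{1}$; combined with the upper bound this forces equality, so $q=w^{+*}$ is optimal for \eqref{J1}.

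The main obstacle is exactly this rank-one claim, i.e. the tightness of the relaxation. For a nonconvex quadratic maximization over a box ($\infty$-norm) constraint the semidefinite relaxation is not tight in general, so the delicate step is to verify that the geometry of $\mathcal{C}_{\text{relax}}$ together with $H\succeq 0$ genuinely places the maximizer at a rank-one extreme point rather than at a higher-rank one such as $4\delta^{2} I$. Establishing that the leading eigenvalue $\mu_{1}$ is simple, and handling the degenerate case $\mu_{1}=\mu_{2}$ where $Q^{*}$ need not be unique, is where I expect the proof to require the most care; if tightness failed the conclusion would have to be weakened to an approximation guarantee, so securing it is the crux.
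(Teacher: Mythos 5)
Your sandwich argument is fine up to the one step that actually carries the theorem, and that step fails as argued. The claim that the linear objective $\operatorname{Tr}(QH)$ ``concentrates the weight'' of $Q$ on the leading eigendirection of $H$ is false under either reading of the constraint set. If $Q \leqslant 4\delta^2 I$ is the Loewner (semidefinite) order, the extreme points of $\{Q : 0 \preceq Q \preceq 4\delta^2 I\}$ are the scaled orthogonal projections $4\delta^2 P$ of \emph{every} rank, and since $H = \sum_k \mu_k v_k v_k^{\top} \succeq 0$, the objective $\sum_k \mu_k\, v_k^{\top} Q v_k$ is maximized by saturating every eigendirection with $\mu_k > 0$: the maximizer is $Q^* = 4\delta^2 I$ whenever $H$ has two or more positive eigenvalues (take $H = I$: the unique maximizer is full rank and its ``top eigenvector'' is arbitrary). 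If instead the constraint is read entrywise on the diagonal, $Q_{ii} \leqslant 4\delta^2$ --- the reading consistent with $\|\bm{w}\|_{\infty} \leqslant 2\delta$ and $Q_{ii} = \bm{w}_i^2$ --- then \eqref{J1r} is the classical MAXCUT-type relaxation of box-constrained quadratic maximization, which has an integrality gap as large as $\pi/2$ (Nesterov's $2/\pi$ bound), so no generic argument can establish unconditional tightness; the top eigenvector of $Q^*$ then yields only an approximation, not the exact $w^{+*}$. Your closing step is also internally inconsistent: ``rescaling $q$ to meet the bound gives a point whose objective value is $q^{\top} H q = \operatorname{Tr}(Q^* H)$'' holds only if $Q^* = qq^{\top}$ exactly and $q$ already saturates the box in every coordinate; otherwise rescaling changes the value and the chain of inequalities collapses.

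For comparison, the paper does not attempt the argument you are trying to supply: it proves nothing locally and defers entirely to Section 4 of \cite{026}, where the analogous relaxation is analyzed for the specific low-dimensional structure arising in that input-design problem and rank-one optimality of the SDP solution is obtained from that structure (with only an approximation guarantee otherwise). So the hedge in your final paragraph --- that without tightness the conclusion must be weakened to an approximation guarantee --- is in fact the correct diagnosis. What your proof is missing, and what any complete proof of Theorem \ref{theorem4} requires, is a \emph{problem-specific} structural property of $H$ (here $H = \tilde{\alpha}^{\top}\tilde{\alpha}$, a Gram matrix built from the rows $\alpha(k,\cdot)$) that forces the relaxed optimum to have rank one; a generic extreme-point or eigenvalue-concentration argument cannot deliver it.
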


The proof of Theorem \ref{theorem4} can be directly obtained by referring to Section 4 in \cite{026}, since the transformed sub-problem \eqref{J1r} is a low dimensional case in \cite{026}. Hence, the solution of \eqref{J1} is obtained from Theorem  \ref{theorem4}.

For the solution of $J_2$, we denote the objective function of $J_2$ by $f_{j_2}$,  then,
\begin{equation}\label{J2p}
\frac{\partial f_{j_2}}{\partial p(l, t)}=2 \sum_{j=r+1}^{s} \sum_{i=r+1}^{s} \bm{\beta}(i, j) y^*(d+i-1) \alpha(t, j) \alpha(i, l).
\end{equation} 
Hence, one infers that the second-order partial derivative of $ f_{j_2}$ can be regarded as a constant, i.e., the Hessian matrix $ H_2 $ of the multivariate function $ f_{j_2}$ with respect to $p$ is a constant matrix. According to \eqref{paB} and \eqref{J2p}, the Taylor expansion of $ f_{j_2}$ related to $p$ does not have a first-order term. Then, $J_2$ can be transformed into the following quadratic problem
\begin{equation}\label{J2pp}
J_{2} = \max _{P} P^{\top} H_2 P,
\end{equation}
where $P$ denotes the sequence composed of $p(i,j)$ and $ H_2 $ is an Hessian matrix which can be solved from \eqref{J2p}. 
Then, the solution of $J_2$ is provided by Theorem \ref{theorem4}. By solving \eqref{J1p} and \eqref{J2pp}, we obtain the optimal solution of \eqref{J}, which is the maximum identification deviation.

To illustrate that  minimizing the identification deviation is equivalent  to  minimizing the variance, we first define the variance of $L$ times of identification as
\begin{equation}\label{var}
    \mu = \sum_{\ell=1}^{L}\|G_\ell - \bar{G}_\ell \|_{\mathrm{F}}^2,
\end{equation}
where $ \bar{G}_\ell = \frac{1}{L} \sum_{\ell=1}^{L}G_\ell$.

Similar to \eqref{J}, $\mu$ is also a quadratic form of the noise with coefficients $y$ and $u$.
Considering that the noise is independent, and $\bm{\beta}$ is a linear function of the noise from \eqref{paB}, then, the change of $y$ and $u$ does not affect the distribution of identification results. 
Hence, minimizing the variance $\mu$ is equivalent to minimizing the maximum deviation $J$.

\section{Input Design for Minimizing Identification Variance}\label{inputdesignsection} 

This section describes the method of input design, which is shown in Algorithm 1, and provides the analysis of the performance of the input design algorithm.
We take the subspace method proposed in this paper to identify the system, then design $u$ to minimize the maximum deviation.
The convergence of the identification error and the identification deviation obtained by the proposed method are proved. 

\subsection{Input Design Algorithm}

This subsection describes the input design algorithm as shown in Algorithm 1. The algorithm is divided into several parts, namely, the initialization, the identification of system matrices, the computation of feasible set of input and the design of input for solving \textbf{P2}.

\subsubsection{Initialization}
In initialization, time is sampled discretely into multiple intervals with $s$ time points in each interval. We set the start time $k$ and the number of groups of data $N$. 
The maximum of the quadratic term of the noise which can be ignored is set as $\epsilon$. 
For security, the bound of the infinite-norm of $y$ and $u$ are determined as $y_\mathrm{M}$ and $u_\mathrm{M}$.
We denote the maximum noise as $\delta$ by Assumption 2. 
At last, we initialize a sequence $Y(0;s\!+\!1)$ and $U(0;s\!+\!1)$ without special design to preliminarily identify the system.

\subsubsection{Identification of System Matrices}
At each $i$-th iteration, we use the subspace method to identify $\hat{A}_{i}$, $\hat{B}_{i}$ and $\hat{C}_{i}$, which are required in \eqref{J} for obtaining the maximum identification deviation. We estimate $\hat{G}(t)$ from \eqref{subspaceidentify} by the latest data $Y(i;s)$ and $U(i;s)$, then compute $\hat{A}_{i}$, $\hat{B}_{i}$, $\hat{C}_{i}$ by the Ho-Kalman Algorithm. 

\subsubsection{Feasible Set of Input}
At $i$-th iteration, the input needs to be designed is $u(s\!+\!i)$.
The input is supposed to ensure the system output $y$ is within the safety range.
Since we do not know the true value of $y$ of next time period, we use the prediction of $y$ instead, which is defined as follows.
\begin{equation}\label{esY}
\begin{aligned}
\hat{Y}(i+1;s) = & \ \hat{O}_c \hat{A}^t O_c^{\mathrm{L}} Y(i;s) + \hat{O}_c \hat{A}^t O_c^{\mathrm{L}} \hat{G}_{i}(t) U(i\!+\!1;s) \\&+ \hat{O}_c \hat{O}_b U(i;s) + \hat{G}_{i}(t) U(i\!+\!1;s).
\end{aligned}
\end{equation}
We keep the real $y$ safe by leave a certain margin of $y_\mathrm{M}$. 

Additionally, $u$ is designed so that $\alpha$ and $\beta$ are bounded within $\alpha_\mathrm{M}$ to prevent excessive errors.
Denote the possible value of $\alpha$ and $\beta$ with the largest norm that can be obtained under bounded noise as $\tilde{\alpha}$.

Therefore, the feasible set $\mathcal{U}$ is given by
\begin{equation}\label{feset}
\mathcal{U}\!:\! \left\{u \!:\! \| \hat{y} \|_{\infty}\!\leqslant \!y_{\mathrm{M}},  \| u \|_{\infty} \!\leqslant\! u_{\mathrm{M}}, \delta \| \tilde{\alpha}  \|_{\infty}^2  \!\leqslant\! \epsilon,  \| \tilde{\alpha} \|_{\infty} \!\leqslant\! \alpha_{\mathrm{M}} \right\}.
\end{equation}

Next, we prove the existence of the feasible set $\mathcal{U}$ in our algorithm.
Considering that the influence of $u$ on an LTI system may have time delay,  Algorithm 1 applies the idea of predictive control to ensure safety when solving the feasible set and designing $u$. 
We find the feasible set of $U(s+i ;m)$ instead of $U(s+i ; 1)$ (i.e., $u(s+i)$) to ensure that the output value of the system $y$ does not exceed the bound.
Only the first $u(s+i)$ is designed and used in the real system. 
The next theorem shows that using the idea of predictive control to design the input signal ensures the system within the safety constraints, i.e., the feasible $u(s+i)$ always exists.

Denote $\mathcal{U}_i$ as the set of sequence of $U(k; h+i)$ and $U(k; h+M)$ such that the system is within the safety constraints from the time $k$ to $k\!+\!h\!+\!i\!-\!1$. Obviously, it follows that $\mathcal{U}_{i+1} \subseteq \mathcal{U}_{i} \subseteq \mathcal{U}_{i-1} \subseteq \cdots \subseteq \mathcal{U}_{0}$. Then, we have the following theorem.
\begin{theorem} \label{theorem5}
$\mathcal{U}_{i+1} \neq \varnothing$ iff $\mathcal{U}_i \neq \varnothing$.
\end{theorem}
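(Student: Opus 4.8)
The plan is to prove the two implications separately, dispatching the forward direction immediately from the stated nesting property and handling the reverse direction by a recursive-feasibility argument in the spirit of model predictive control.

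First I would treat the direction $\mathcal{U}_{i+1} \neq \varnothing \Rightarrow \mathcal{U}_i \neq \varnothing$. By the chain $\mathcal{U}_{i+1} \subseteq \mathcal{U}_i \subseteq \cdots \subseteq \mathcal{U}_0$ established just before the theorem, any admissible input sequence that keeps the system within the safety constraints over the longer horizon $[k,\,k+h+i]$ in particular keeps it safe over the shorter horizon $[k,\,k+h+i-1]$. Restricting such a sequence therefore yields an element of $\mathcal{U}_i$, so nonemptiness transfers downward and this implication is essentially trivial.

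The substantive direction is $\mathcal{U}_i \neq \varnothing \Rightarrow \mathcal{U}_{i+1} \neq \varnothing$. Here I would take an arbitrary feasible sequence $U \in \mathcal{U}_i$, which by definition makes the predicted outputs $\hat{Y}$ satisfy $\|\hat{y}\|_\infty \leqslant y_{\mathrm{M}}$ at every instant from $k$ to $k+h+i-1$ while respecting $\|u\|_\infty \leqslant u_{\mathrm{M}}$ and the coefficient bounds in \eqref{feset}. The goal is to construct one further control action so that the extended sequence also keeps the predicted output within $y_{\mathrm{M}}$ at time $k+h+i$, which places the extension in $\mathcal{U}_{i+1}$. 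Using the prediction model \eqref{esY}, I would express $\hat{y}(k+h+i)$ as an affine function of the new input and exhibit an admissible choice: because the algorithm plans over a horizon of length $M$, the feasible plan already encoded in $\mathcal{U}_i$ contains a safe continuation, and shifting that plan forward by one step supplies the required control together with the remaining horizon inputs. The safety margin retained in $y_{\mathrm{M}}$ ensures that this shifted-and-appended candidate still satisfies all constraints of $\mathcal{U}$, so it lies in $\mathcal{U}_{i+1}$.

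I expect the main obstacle to be verifying that the appended control action can always be chosen within the input bound $u_{\mathrm{M}}$ while simultaneously preserving $\|\hat{y}(k+h+i)\|_\infty \leqslant y_{\mathrm{M}}$ and the coefficient bounds $\delta\|\tilde{\alpha}\|_\infty^2 \leqslant \epsilon$ and $\|\tilde{\alpha}\|_\infty \leqslant \alpha_{\mathrm{M}}$ — that is, that the four constraints in \eqref{feset} remain jointly satisfiable at the new step. This is precisely where the predictive-horizon construction is essential: rather than proving reachability of the safe region from scratch, I would lean on the fact that horizon-$M$ feasibility embedded in $\mathcal{U}_i$ already certifies the existence of a safe continuation, so the extension reduces to a shift of a known feasible plan rather than an independent existence argument.
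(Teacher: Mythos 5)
Your necessity direction matches the paper's and is fine: nesting $\mathcal{U}_{i+1} \subseteq \mathcal{U}_i$ settles it. The sufficiency direction, however, contains a genuine gap — it is circular at exactly the point where the work must be done. By its definition, a member of $\mathcal{U}_i$ certifies safety only through time $k+h+i-1$; the longer sequence $U(k;h+M)$ appearing in the definition does not come with any certified safety at time $k+h+i$. Your shift-and-append step rests on the assertion that ``the feasible plan already encoded in $\mathcal{U}_i$ contains a safe continuation,'' but that assertion is precisely the claim $\mathcal{U}_{i+1} \neq \varnothing$ restated. In model predictive control, recursive feasibility of this shift-based kind is obtained only from a terminal ingredient — a terminal invariant set, terminal controller, or terminal constraint — and you invoke none; the margin retained in $y_{\mathrm{M}}$ serves to absorb the noise $w$ and $e$, not to certify that a one-step extension exists.

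What the paper actually uses, and what your proposal omits entirely, is controllability (Assumption 1). In the paper's proof, the output bound $\|Y\|_\infty \leqslant y_{\mathrm{M}}$ is first translated into state interval constraints $\underline{x} \leqslant X(k;h+i) \leqslant \overline{x}$; then, because the system is controllable, $O_b$ has full row rank and admits a right inverse, so the term $O_b U$ can be assigned \emph{any} value in $\mathbb{R}^m$. Feasibility of the extended horizon therefore reduces to comparing the interval conditions \eqref{xexist} and \eqref{xprove}, whose sufficient conditions \eqref{compare1}--\eqref{compare2} collapse via the one-step state recursion to \eqref{theorem5fi}, which holds. This structural step is not cosmetic: without controllability the theorem can fail outright. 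An uncontrollable mode drifting toward the output bound gives a situation where $\mathcal{U}_i$ is nonempty (safe through $k+h+i-1$) while no admissible input can keep $y$ within $y_{\mathrm{M}}$ at time $k+h+i$, so $\mathcal{U}_{i+1} = \varnothing$. Since your argument never uses the system matrices or Assumption 1 at all, it cannot distinguish these cases, and the ``main obstacle'' you correctly identify — joint satisfiability of the constraints in \eqref{feset} at the new step — is left unresolved rather than resolved by the shift.
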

\begin{proof}
Please see Appendix D.
\end{proof}
Theorem \ref{theorem5} shows that if there exists an initial input signal sequence in  $\mathcal{U}_{0}$, the feasible input can always be designed in $\mathcal{U}_{i}$. Hence, during each iteration in the input design algorithm, we design the feasible first-step $U(i  s; 1)$ and thus, by Theorem \ref{theorem5}, there exists feasible $u$ in next iteration. Therefore, we ensure the safety constraints through the method of predictive control.

\subsubsection{Design of Input}
Finally, we design the input signal in the feasible set.
By the transformation in Theorem \ref{theTransformation}, the input design problem \textbf{P2} is equivalent to the following one,
\begin{equation}\label{min-max}
\begin{aligned}
 &\textbf{P3.} \
\min _{u}  J(u)  \\
\text { s.t. }  &  J(u)=\max _{e_i, w_i,e_j,w_j}\left\|\hat{G}_{i}-\hat{G}_{j}\right\|_{\mathrm{F}} \\
& \hat{G}_{\ell}=g\left(u^*, y^*, e_\ell, w_\ell\right), \ell = i,j;\\
& \left\|e\right\|_{\infty} \leqslant e_M, \left\|w\right\|_{\infty} \leqslant w_M,
\end{aligned}
\end{equation}
where $y^{*}=y-w$, $u^{*}=u-e$, and 
\begin{equation}
    g\left(u^*, y^*, e_\ell, w_\ell\right)  = f(u^*+e_\ell,y^*+w_\ell).
\end{equation}


By \eqref{J1}, \eqref{J2} and \eqref{J}, \textbf{P3} is equivalent to minimize the squared maximum deviation of estimation of $G$ at $N$ iteration, which is defined as
\begin{equation}\label{costab}
\begin{aligned}
    \min _{u} \max _{e,w} \sum_{j=r+1}^{s}&\left(\sum_{i=r+1}^{s}\beta(i,j)y(d+i-1) \right. 
    \\ &\left.+\sum_{i=r+1}^{s}\alpha(i,j)w(d+i-1) \right)^2,
\end{aligned}
\end{equation}
where $\alpha(i,j)$ and $\beta(i,j)$ are elements in the inverse matrix of the matrix related to $u$, see the previous section for details.

Next, we give the method to obtain the analytical solution of $u$ in \textbf{P3}.
Define the matrix formed by $\alpha(i,j)$ as $ \left[\begin{array}{cc}
    Y & y\\
    u_0^\mathrm{T} &  u
\end{array}\right]^{-1}$.
Denote $(u - u_0^\mathrm{T}Y^{-1}y)^{-1}$ by $u_2$, $Y^{-1}y$ by $y_1$. We have
\begin{equation}\label{YU-1}
\left[
\begin{array}{cc}
    Y & y\\
    u_0^\mathrm{T} &  u
\end{array}\right]^{-1}
=\left[
\begin{array}{cc}
    Y^{-1} +y_1 u_2 u_0^\mathrm{T}Y^{-1} & -y_1 u_2\\
    -\bm{u}^\mathrm{T}Y^{-1}u_2 &  u_2
\end{array}\right].
\end{equation}
By \eqref{YU-1}, we can obtain the relationship between $\alpha(i,j)$ and $u$.
Since $u$ is the last item of the matrix formed by $\alpha(i,j)$, $Y, y, u_0^\mathrm{T}$ and $u$ are independent. Hence, $Y, y, y_1, u_0^\mathrm{T}$ and $u_2$ are independent.
Similarly, when we fixed the noise $e_i, w_i, e_j, w_j$, the relationship between $\beta(i,j)$ matrix and $u$ can be formulated as
\begin{equation}\label{YU-2}
\left[
\begin{array}{cc}
    Y_i & y_i\\
    \bm{u}_i^\mathrm{T} &  u_i
\end{array}\right]^{-1}- \left[
\begin{array}{cc}
    Y_j & y_j\\
    \bm{u}_j^\mathrm{T} &  u_j
\end{array}\right]^{-1}
\end{equation}
Similar to \eqref{YU-1},   $Y_\ell, y_\ell, y_\ell, \bm{u}_\ell^\mathrm{T}$ and $u_\ell$ are independent, $\ell = i,j$.
Combining \eqref{costab}, \eqref{YU-1} and \eqref{YU-2}, substituting the optimization variable of the input design problem into $u_2$, we have the equivalent input design problem as follows,
\begin{equation}\label{costu2}
\begin{aligned}
    \min _{u_2} J_0(u_2) = \min _{u_2} \max _{e,w} \sum_{j=r+1}^{s}&\left( F(y\!,\!u\!,\!e\!,\!w)u_2 \!+\! c(y\!,\!u\!,\!e\!,\!w) \right)^2,
\end{aligned}
\end{equation}
where $F(y,u,e,w), c(y,u,e,w)$ are functions only related to $y, u, e, w$ and not related to $u_2$.

For the solution of \eqref{costu2}, we have the following lemma.
\begin{lemma}
$J_0(u_2)$ is a convex function related to $u_2$.
\end{lemma}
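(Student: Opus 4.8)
The plan is to exploit the compositional structure of $J_0(u_2)$ in \eqref{costu2}: it is a pointwise maximum over the noise variables $(e,w)$ of an inner objective that, for each fixed noise realization, is a convex quadratic in $u_2$. Since the pointwise supremum of a family of convex functions is itself convex, convexity of $J_0$ follows once the inner convexity is established.

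First I would fix an arbitrary admissible noise realization $(e,w)$ and examine the inner sum
\[
g_{e,w}(u_2) = \sum_{j=r+1}^{s}\bigl(F(y,u,e,w)\,u_2 + c(y,u,e,w)\bigr)^2 .
\]
By construction $F$ and $c$ do not depend on $u_2$; this is exactly what the reformulation via the block-inverse identities \eqref{YU-1} and \eqref{YU-2} buys us, since the entries $\alpha(i,j)$ and $\beta(i,j)$ appearing in \eqref{costab} are affine in the Schur-complement block $u_2$ with coefficients assembled only from the mutually independent quantities $Y, y, u_0^{\mathrm T}$ and the fixed noise. Hence each summand is the square of an affine function of $u_2$, so $g_{e,w}$ is a convex quadratic form whose Hessian equals $2\sum_{j} F^{\mathrm T} F \succeq 0$, i.e. positive semidefinite for every $(e,w)$.

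Next I would invoke the standard fact that a pointwise supremum of convex functions is convex. Writing $J_0(u_2)=\sup_{(e,w)} g_{e,w}(u_2)$ over the bounded noise set $\|e\|_\infty\leqslant e_M$, $\|w\|_\infty\leqslant w_M$, and using that each $g_{e,w}$ is convex in $u_2$ by the previous step, the maximum $J_0$ inherits convexity in $u_2$, which is the claim.

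The main obstacle is the justification that $F$ and $c$ carry no dependence on $u_2$. This is not automatic: $u_2$ is the inverse Schur complement of the designed input block, and a priori the output $y$ and the remaining blocks could be coupled to it. The crux is the independence read off from \eqref{YU-1}, namely that because $u$ is the trailing block of the matrix whose inverse produces the $\alpha$-coefficients, the quantities $Y, y, y_1, u_0^{\mathrm T}$ and $u_2$ decouple, so that after promoting $u_2$ to the free optimization variable the coefficients $F, c$ are genuinely constant in $u_2$. Once this separation is verified and the quadratic noise contributions are discarded as permitted by Assumption 3, the affine-in-$u_2$ form of each term is exact and the convexity argument proceeds with no further difficulty.
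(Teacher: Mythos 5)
Your proof is correct and takes essentially the same route as the paper's: both rest on (i) the observation that for each fixed noise realization $(e,w)$ the objective is a sum of squares of functions affine in $u_2$ (since $F$ and $c$ in \eqref{costu2} do not depend on $u_2$), hence convex, and (ii) the fact that a pointwise maximum of convex functions is convex --- the paper simply writes out this standard fact by hand via the maximizers $\phi_0,\phi_1,\phi_2$ instead of citing it. Your extra justification of the affine-in-$u_2$ structure through \eqref{YU-1}--\eqref{YU-2} and the positive semidefinite Hessian $2\sum_j F^{\mathrm{T}}F$ merely fills in what the paper labels ``obvious,'' so there is no substantive difference or gap.
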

\begin{proof}
Please see Appendix E.
\end{proof}
By Lemma 2, we can obtain the optimal solution of $u_2$ through the gradient descent method, and the global optimality can be guaranteed.
To speed up the convergence rate, considering that $J_0(u_2)$ is a quadratic function form of $u_2$ when $y,u,e,w$ are fixed, we can quickly initialize the gradient descent method.
The algorithm for solving the optimal solution of $u_2$ is summarized as follows.
\begin{itemize}
    \item \textbf{Gradient Descent for Input Design}
\end{itemize}
\begin{enumerate}[i)]
\item Initialize: set $e_i,w_i,e_j,w_j$ to zero in $J_0(u_2)$ and solve the optimal solution $u^*$ of the quadratic function $J_0(u_2)$.
\item Gradient calculation: consider the change of the function value of $J_0(u_2)$ when $u_2$ changes a small range $\delta_u$ as the gradient of $J_0(u_2)$. 
The value of $J_0(u_2)$ at each point $u(i)$ is given by Theorem \ref{theorem3} and Theorem \ref{theorem4}.
Denote the gradient at $u(i)$ as $\nabla_i$.
\item Gradient descent: Perform gradient descent until convergence. At the $i$-th iteration, $u_2(i+1) = u_2(i)-\lambda_i \nabla_i$, where $\lambda_i$ is the learning rate satisfying $\sum_{i=1}^{\infty}\lambda_i = \infty, \ \sum_{i=1}^{\infty}\lambda_i^2 < \infty$.
\end{enumerate}

After the gradient descent method, we obtain $u_d$ through $u_2$ and derive the matrix $G$ by the subspace identification function. 
Repeating the process of system identification and input design, the identification results are continuously updated while designing input to minimize the variance.

\subsection{Performance Analysis}
In this subsection , we analyze the optimality of the design of input, and the convergence and stability of the proposed identification method.

First, we analyze the relationship between the designed input signal and the optimal input for minimizing the maximum deviation of identification results in \eqref{costab}.
Denoting the designed input in Algorithm 1 by $u_{\text{Alg}}$, then we have the following theorem.
\begin{theorem}\label{theorem6}
$u_{\text{Alg}}$ is the optimal solution to problem \eqref{costu2} when $u_{\text{Alg}} \in \mathcal{U}$.
\end{theorem}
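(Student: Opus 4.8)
The plan is to show that $u_{\text{Alg}}$ is the \emph{unconstrained} global minimizer of $J_0(u_2)$ and that, under the hypothesis $u_{\text{Alg}} \in \mathcal{U}$, this unconstrained minimizer is in fact feasible, so it also solves the constrained problem \eqref{costu2}. The key fact I would lean on is Lemma~2, which establishes that $J_0(u_2)$ is convex in $u_2$. For a convex objective, any point at which the gradient vanishes is a global minimizer over the whole domain; hence if the gradient-descent iteration of the \textbf{Gradient Descent for Input Design} procedure converges, it converges to the global optimum of $J_0$.

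First I would invoke Lemma~2 together with the stepsize conditions $\sum_i \lambda_i = \infty$, $\sum_i \lambda_i^2 < \infty$ to argue that the gradient-descent scheme converges to a stationary point $u_2^\star$ of $J_0(u_2)$, and that by convexity $u_2^\star$ is the global minimizer of $J_0$ over its entire (unconstrained) domain. At this stage I would note that the inner maximization over $e,w$ is already resolved analytically by Theorem~\ref{theorem3} and Theorem~\ref{theorem4}, so that $J_0(u_2)$ is a well-defined convex function of $u_2$ alone and the descent is performed on this reduced objective. The change of variables from $u$ to $u_2$ via \eqref{YU-1} is a bijection on the relevant domain, so the optimizer in the $u_2$ coordinate pulls back to the corresponding designed input $u_{\text{Alg}}$.

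Next I would close the gap between the constrained and unconstrained problems. Since \eqref{costu2} is a minimization of the convex $J_0$ over the feasible set determined by $\mathcal{U}$, and $u_{\text{Alg}}$ is by construction the unconstrained global minimizer, the hypothesis $u_{\text{Alg}} \in \mathcal{U}$ means the global minimizer already lies inside the feasible region. A global minimizer of a function that happens to be feasible is automatically a minimizer over any subset containing it; therefore $u_{\text{Alg}}$ solves the constrained problem \eqref{costu2}. I would phrase this as: for all feasible $u_2 \in \mathcal{U}$ we have $J_0(u_{\text{Alg}}) \le J_0(u_2)$ because the inequality holds over the \emph{entire} domain, a fortiori over $\mathcal{U}$.

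The main obstacle I anticipate is the convergence argument for the gradient descent, not the optimality transfer. Specifically, one must be careful that $J_0$, though convex, is differentiable (or at least that a subgradient scheme converges) since the inner $\max$ over $e,w$ could in principle create kinks; the quadratic structure in \eqref{costu2} and the fact that the maximizing noise is resolved by the semidefinite relaxation in Theorem~\ref{theorem4} should yield enough smoothness, but this is the step that needs justification. A secondary subtlety is ensuring the reparametrization $u \mapsto u_2$ is valid (i.e.\ the relevant blocks $Y$ and $u - u_0^{\mathrm{T}} Y^{-1} y$ are invertible), which follows from Lemma~1 guaranteeing nonsingularity of $\mathcal{L}[y,u]$ with probability one. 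Once convergence and the reparametrization are secured, the conclusion is immediate from convexity.
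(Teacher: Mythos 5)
Your proposal takes essentially the same route as the paper's own proof: invoke Lemma~2 for convexity of $J_0(u_2)$, conclude that gradient descent attains the unconstrained global minimizer, and observe that the hypothesis $u_{\text{Alg}} \in \mathcal{U}$ makes this minimizer feasible and hence optimal for the constrained problem \eqref{costu2}. The paper's proof is a two-line version of exactly this argument, so your additional care about subgradient convergence under the stepsize conditions and the validity of the $u \mapsto u_2$ reparametrization (via Lemma~1) only makes explicit steps the paper leaves implicit.
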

\begin{proof}
By Lemma 2, problem \eqref{costu2} is a convex optimization problem when we do not consider the constraints. Hence, when $u_{\text{Alg}} \in \mathcal{U}$, the input signal derived by gradient descent, $u_{\text{Alg}}$, is the optimal solution.
\end{proof}

\begin{algorithm}[t]
		\small
		\LinesNumbered
		\caption{Input Design Algorithm}\label{Inputdesignal}
		{\bf Initialize:}
        system model \eqref{sysmodel},  initial sequence of $y$ and $u$, $\epsilon$, $\delta$, $h$, $t$, $N$, $y_{\mathrm{M}}$, $u_{\mathrm{M}}$, $\alpha_\mathrm{M}$

		\While{$i < N$}
		{
		\KwIn{latest output data $y_i$.}
	
		compute $\hat{G}(t)$ by \eqref{subspaceidentify}\\
		obtain $\hat{A}_{i},\hat{B}_{i},\hat{C}_{i}$ by the Ho-Kalman Algorithm \cite{043}\\
		find the feasible set $\mathcal{U}$ of $U(s+i; m)$ by \eqref{feset}\\
		use gradient descent to solve \textbf{P3} and get $u(s+i)$ in $\mathcal{U}$.\\ 
		\KwOut{$\hat{A}_{i},\hat{B}_{i},\hat{C}_{i}$, designed input $u(s+i)$.
		}
		}
\end{algorithm}

Next, we analyze the convergence of the identification error of Algorithm 1, which implies the accuracy of the proposed identification method.
Let $\Delta G$ be the square of the identification error of the Markov parameter matrix $G$, i.e.,
\begin{equation}\nonumber
    \Delta G = \left\|\hat{G}(t)-G^{*}(t)\right\|_{\mathrm{F}}^{2}, 
\end{equation}
where  $G^{*}$ is the true value of the Markov parameter matrix and $\hat{G}$ is the estimation obtained by Algorithm 1.
Then, we have the following theorem.
\begin{theorem}\label{theorem7}
The identification error satisfies 
\begin{equation}\nonumber
\lim\limits_{N\to\infty} \Delta G= 0;
\end{equation}
Furthermore, if the noise obeys Gaussian distribution, then 
\begin{equation}\nonumber
    \Pr\left\{\Delta G \leqslant \frac{c_1}{N}(c_2 + \tau)^2\right\}>1 - 2 e^{-\tau^2/2},
\end{equation}
holds for $\forall \tau \geqslant 0$, where $c_1, c_2$ are constants determined by the system model.
\end{theorem}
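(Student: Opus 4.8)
The plan is to view $\hat G$ in \eqref{subspaceidentify} as the empirical average of $N+1$ per-batch estimators and to show that $\hat G-G^*$ is, to first order in the noise, a zero-mean random matrix whose magnitude decays like $1/\sqrt{N}$. Writing $\hat G_i$ for the $i$-th summand in \eqref{subspaceidentify}, Theorem \ref{theMarkov} (equation \eqref{noisefreeidentify}) gives $\hat G_i=G^*$ exactly in the noise-free case, so in the presence of noise I would set $Z_i:=\hat G_i-G^*$, which collects the entire dependence on the batch noise $(e_i,w_i)$. Substituting $y=y^*+w$ and $u=u^*+e$ and expanding $\mathcal L^{\dagger}$ exactly as in the proof of Theorem \ref{theorem3} — using the derivative \eqref{paB} and discarding the quadratic noise terms by Assumption 3 — shows that each $Z_i$ is a linear functional of $(e_i,w_i)$ whose coefficients are the entries $\alpha$ of $\mathcal L^{-1}[y^*,u^*]$. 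Hence $\hat G-G^*=\frac{1}{N}\sum_{i=0}^{N}Z_i$ is an average of independent, zero-mean matrices, the independence and zero mean following from Assumption 2 together with Theorem \ref{theTransformation}.

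For the consistency claim I note that $\hat G-G^*=\frac{1}{N}\sum_i Z_i$ has mean zero and variance $\frac{1}{N^2}\sum_i\mathrm{Var}(Z_i)=O(1/N)$, because the $Z_i$ are independent, zero-mean, and uniformly bounded (the noise is bounded and the coefficients $\alpha$ are bounded on the feasible set $\mathcal U$ of \eqref{feset}). Thus $\mathbb E\,\Delta G=\mathbb E\|\hat G-G^*\|_{\mathrm F}^2=O(1/N)\to 0$, which already gives $\Delta G\to 0$ in probability, and almost surely by Kolmogorov's strong law for independent uniformly bounded summands.

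For the high-probability bound under Gaussian noise, the linearization makes $\hat G-G^*$ exactly Gaussian with covariance scaled by $1/N$. I would standardize the noise, write $\mathrm{vec}(\hat G-G^*)=MX$ with $X\sim\mathcal N(0,I)$ and $\|M\|_{\mathrm{op}}=\sqrt{c_1/N}$ fixed by the per-batch covariance (hence by the system model through $\alpha$), and apply the Gaussian concentration inequality to the Lipschitz map $X\mapsto\|\hat G-G^*\|_{\mathrm F}=\|MX\|_2$, whose Lipschitz constant is $\|M\|_{\mathrm{op}}=\sqrt{c_1/N}$. This yields $\Pr\{\|\hat G-G^*\|_{\mathrm F}\geqslant\mathbb E\|\hat G-G^*\|_{\mathrm F}+\sqrt{c_1/N}\,\tau\}\leqslant e^{-\tau^2/2}$; bounding $\mathbb E\|\hat G-G^*\|_{\mathrm F}\leqslant\sqrt{\mathbb E\|\hat G-G^*\|_{\mathrm F}^2}=\sqrt{c_1/N}\,c_2$ and squaring the event $\|\hat G-G^*\|_{\mathrm F}\leqslant\sqrt{c_1/N}\,(c_2+\tau)$ produces $\Delta G\leqslant\frac{c_1}{N}(c_2+\tau)^2$. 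The factor $2$ in the stated probability comes either from a two-sided use of the concentration inequality or, more naturally here, from a union bound over the two independent noise contributions $J_1$ and $J_2$ isolated in Theorem \ref{theorem3}, each carrying its own $e^{-\tau^2/2}$ tail.

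The main obstacle is the linearization step. Because $\mathcal L^{\dagger}$ depends nonlinearly on the noise, I must justify rigorously that the first-order expansion — dropping the quadratic term $p(t,j)\,p(i,l)$ in \eqref{paB} — is legitimate, and that the neglected remainder corrupts neither the exact Gaussianity required by the concentration argument nor the $1/N$ rate. This is precisely where Assumption 3 (large signal-to-noise ratio) and the feasible-set constraint $\delta\|\tilde\alpha\|_\infty^2\leqslant\epsilon$ in \eqref{feset} do the work, bounding the quadratic remainder uniformly; once that is settled, identifying $c_1$ and $c_2$ explicitly in terms of $\alpha$ and the noise variance reduces to routine bookkeeping on the covariance of $\sum_i Z_i$.
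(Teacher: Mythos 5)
Your proposal is correct and follows essentially the same route as the paper's Appendix F: a per-batch decomposition of $\hat G-G^*$ into terms linear in the noise (via the linearization \eqref{paB} justified by Assumption 3), independence and boundedness of $\alpha,\beta,y,w$ for consistency, and Gaussian concentration of the Lipschitz map given by the Frobenius norm for the tail bound. The only cosmetic difference is that the paper packages the concentration step by citing Lemma A.1 of \cite{029}, which directly yields the constant $(2\sqrt{n+s}+\tau)^2$ with the $1-2e^{-\tau^2/2}$ probability, whereas you derive the same $e^{-\tau^2/2}$-type tail from first principles via Borell--TIS and a Cauchy--Schwarz bound on the expected norm.
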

\begin{proof}
Please see Appendix F.
\end{proof}

Theorem \ref{theorem7} shows that the estimation error of our algorithm converges at a speed of $1/\sqrt{N}$ and converges to 0 in the infinite time domain.
Actually, the designed input obtains a better result with a lower amount of data compared with the white noise input as shown in simulations.

Finally, we investigate the stability of the identification results derived by Algorithm 1, which is the goal of input design. 
\begin{theorem}\label{theorem8}
The maximum identification deviation of the proposed method with input design converges to 0 at a speed of $\mathcal{O}(1/N)$.
\end{theorem}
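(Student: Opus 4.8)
The plan is to exploit the fact that the estimator \eqref{subspaceidentify} is an arithmetic mean of $N$ per-batch estimates built on disjoint data windows, and to show that this averaging contracts the noise-induced spread at the rate $1/N$. Recall first the decomposition of Theorem \ref{theorem3}: the squared maximum deviation $J^2(u)$ splits into the two independent contributions $J_1$ and $J_2$ of \eqref{J1}--\eqref{J2}, each of which, after the negligible quadratic noise terms are discarded under Assumption 3, is a quadratic form that is \emph{linear} in the per-realization noise differences $\bm{w}$ and $\bm{\beta}$ (hence linear in $\bm{e},\bm{w}$). Writing the deviation of the averaged estimator as $\hat{G}_i-\hat{G}_j=\frac{1}{N}\sum_{k=0}^{N}\bigl(\hat{G}^{(k)}_i-\hat{G}^{(k)}_j\bigr)$, where $\hat{G}^{(k)}$ is the estimate produced by the $k$-th batch, the linearity lets me carry this batchwise decomposition into $J_1$ and $J_2$ directly.

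The key structural input is Assumption 2. Because the batches start at the multiples $k=s\cdot i$ and each occupies $s$ consecutive instants, the windows are disjoint, so the per-batch noise differences are zero-mean and mutually independent. The decisive point is then to measure the spread not as a naive worst case handled separately in each batch — which would not contract under averaging — but through the expected squared deviation, equivalently the variance $\mu$ of \eqref{var}, shown in Section \ref{subspace} to be minimized simultaneously with $J$. Independence makes the cross terms vanish in expectation, giving $\mathbb{E}\|\hat{G}_i-\hat{G}_j\|_{\mathrm{F}}^2=\frac{1}{N^2}\sum_{k=0}^{N}\mathbb{E}\|\hat{G}^{(k)}_i-\hat{G}^{(k)}_j\|_{\mathrm{F}}^2$, which supplies the $1/N$ prefactor once the per-batch term is bounded uniformly in $k$.

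To secure that uniform bound I would invoke the feasible set \eqref{feset} together with Theorem \ref{theorem6}: the designed input keeps the entries $\alpha,\beta$ bounded by $\alpha_{\mathrm{M}}$ and the noise bounded by $\delta$, so each single-batch $J_1$- and $J_2$-contribution admits a finite bound independent of $N$, and by Theorem \ref{theorem6} it is in fact the smallest such bound. Combining this uniform per-batch bound with the independence-driven $1/N$ factor yields $J^2(u)=\mathcal{O}(1/N)$, i.e. the maximum deviation converges at the claimed rate, consistent with the variance column of Table \ref{algorithmstable}. For contrast, a white-noise input does not control $\alpha,\beta$ through \eqref{feset}, so the per-batch spread is governed by a central-limit fluctuation that is only $\mathcal{O}(N^{-1/2})$ in probability, reproducing the slower rate reported there.

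The step I expect to be the main obstacle is precisely the passage from the worst-case maximum in the definition of $J(u)$ to the $1/N$ rate: a maximum taken independently in each batch does \emph{not} contract under averaging, so the argument must be routed through the expected squared deviation and the equivalence between minimizing $J$ and minimizing $\mu$. Making that equivalence quantitative — showing that the optimized worst-case spread inherits the $1/N$ scaling of the variance rather than the $1/\sqrt{N}$ scaling of a typical fluctuation — is the delicate point, and it rests entirely on the zero-mean, mutually independent batch structure guaranteed by Assumption 2 and on the uniform boundedness secured by the input design through Theorem \ref{theorem6}.
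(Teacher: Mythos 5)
There is a genuine gap, and it sits exactly where you flagged it. Your route decomposes $\hat{G}_i-\hat{G}_j=\frac{1}{N}\sum_{k}\bigl(\hat{G}^{(k)}_i-\hat{G}^{(k)}_j\bigr)$ with \emph{all} batches' noises differing between the two realizations, and then, correctly observing that the worst case does not contract under averaging (adversarial per-batch deviations can all align, leaving the maximum deviation $\Theta(1)$), you retreat to the expected squared deviation. But that retreat only yields $\mathbb{E}\|\hat{G}_i-\hat{G}_j\|_{\mathrm{F}}^2=\mathcal{O}(1/N)$, i.e.\ a deviation of order $\mathcal{O}(N^{-1/2})$ — your own conclusion ``$J^2(u)=\mathcal{O}(1/N)$'' says precisely this. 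That is the rate the paper attributes to \emph{white-noise} input, not the claimed $\mathcal{O}(1/N)$ for the designed input, so the proposal proves a strictly weaker statement. The equivalence between minimizing $J$ and minimizing $\mu$ in Section \ref{subspace} cannot rescue this: it asserts that the same input $u$ optimizes both objectives, not that the worst-case deviation inherits the scaling of the variance, and no zero-mean/independence argument can convert an RMS bound of $1/\sqrt{N}$ into a worst-case bound of $1/N$.

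The missing idea is the paper's interpretation of the deviation at iteration $N$, which is deterministic and needs no probabilistic machinery at all. In Appendix G the two compared results share the \emph{same already-realized} batch estimates $G_1,\dots,G_{N-1}$ (these are fixed data by iteration $N$), and differ only in the $N$-th batch's noise:
\begin{equation}\nonumber
D_N=\Bigl\|\tfrac{1}{N}\bigl(\textstyle\sum_{i=1}^{N-1}G_i+G_{N_1}\bigr)-\tfrac{1}{N}\bigl(\textstyle\sum_{i=1}^{N-1}G_i+G_{N_2}\bigr)\Bigr\|_{\mathrm{F}}=\tfrac{1}{N}\,\delta D_N,
\end{equation}
so the whole burden is to bound the single-batch deviation $\delta D_N$ by a constant independent of $N$. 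This is where your invocation of the feasible set \eqref{feset} is on target and matches the paper: $\|\bm{w}\|_\infty\leqslant 2\delta$, $\alpha$ bounded by $\alpha_{\mathrm{M}}$ by design, and the first-order expansion $\beta=\mathcal{L}^{-1}[y^*,u^*](e_i-e_j,\,w_i-w_j)\,\mathcal{L}^{-1}[y^*,u^*]$ giving $\beta_{\max}\leqslant 2\delta\alpha_{\mathrm{M}}^2$, whence $\delta D_N\leqslant 2\delta(s-r)^{3/2}\alpha_{\mathrm{M}}(1+\alpha_{\mathrm{M}}y_{\mathrm{M}})$. Combined with the displayed identity this gives $D_N=\mathcal{O}(1/N)$ immediately. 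Note the contrast this makes with white noise in the paper: there the failure is not a weaker averaging argument but the inability to bound $\alpha(i,j)$ at all (cf.\ \eqref{changeofalpha}), forcing a convergence-in-probability statement at rate $\mathcal{O}(1/\sqrt{N})$. Your proposal, by contrast, obtains $1/\sqrt{N}$ even with the designed input, because it measures the wrong quantity; replacing the expectation step with the shared-history definition of $D_N$ closes the gap.
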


\begin{remark}
Theorem \ref{theorem8} is only applicable to the condition that the input can be designed, where we can ensure the maximum value of $\alpha$, i.e., $\alpha_{\mathrm{M}}$ is a constant.
If the input is not designed, the speed of convergence of maximum identification deviation is slower, e.g., $\mathcal{O}(1/\sqrt{N})$.
\end{remark}

\begin{proof}
Please see Appendix G.
\end{proof}

Theorem \ref{theorem8} shows that the input design process in our proposed method have smaller identification deviation and identification variance.
It shows that input design contributes to a more stable and more efficient identification result, and that the use of noise input slows the convergence speed of the maximum identification deviation.
This is also reflected in the simulation in the next section.
Overall, by Theorem 6-8, it implies that Algorithm 1 designs the optimal input signal to achieve identification with the error converges to zero and the minimum variance.


\section{Numerical simulation and Experiments}\label{Sim}

This section uses a numerical simulation to compare the performance of our input design method with a white noise input method and an PEM-based method. 
The simulation result verifies the effectiveness of Algorithm 1 in this paper.

We randomly generate an SISO system model of order 4 and convert it into a controllable canonical form. 
The conversion is reasonable because the subspace method only identifies a similar transformation of the system, and the conversion does not change the matrix $G$. 

The controllable canonical form of the model is
\begin{equation}\nonumber
\begin{small}
\begin{aligned}
&A = \left[\begin{array}{cccc}
0 & 1 & 0 & 0 \\
0 & 0 & 1 & 0 \\
0 & 0 & 0 & 1 \\
-1.23 & -2.17 & -1.42 & -1.21
\end{array}\right],\\
&B =\left[0\ 0\ 0\ 1\right]^{\top}, \quad C = [0.82\ 0.17\ -0.28 \ 0.27].
\end{aligned}
\end{small}
\end{equation}
The constraints of the SISO system are $\delta = 0.05$, $y_{\mathrm{M}}  = 100$, $u_{\mathrm{M}} = 10$.
The noise is generated as a white noise sequence obeying uniform distribution.
We take the Frobenius norm of the Markov parameter matrix $G$ as the index to evaluate the system identification result, which is consistent with the optimization goal of \eqref{min-max}.

In Figure 3, we consider the scenario of identifying a running system.
We use $x(0) = [0;0.5;0.3;1]$ to generate an initial sequence of length 44 and apply the input design method in this paper for system identification, compared with the white noise input and the input which maximizes the Fisher information matrix based on the PEM\cite{026}. 
For generality, we conduct 100 Monte Carlo trials under random noise generation and record the average error when the number of batches of data increases.
It is observed that as the number of data increases, the identification errors of the three identification methods decrease.
The error of the proposed method and the PEM-based method are always within the upper bound defined by Theorem \ref{theorem5} ($t = 5$), and much smaller than white noise input. 
In 80 batches of data, the error of the proposed method and the PEM-based method are 28.3\% and 40.3\% of the white noise input, respectively.
Note that the performance of the PEM-based method is similar to the proposed method in large time domain.
However, the error of the PEM-based method tends to be large when the amount of data is relatively small.
It implies that the proposed method has advantages in fast identification in a short time. 
This is because that the proposed  method makes full use of the subspace expression in Section \ref{subspace} and the safety range of $y$, as shown in Figure 3. 
In addition, the identification error of the PEM-based method fluctuates when the data set is not large, and the reliability of the PEM-based result is not as high as the proposed method.

Figure 4 and 5 provides intuitive evidence that our method reduces the maximum deviation. In Figure 4, we consider the distribution of the results of multiple identifications of systems with the same parameter matrix and different noises.
We made simulations with the three types of input in the case of 70 and 250 batches of data. 
The identification results of 70 batches taking random input or the PEM based input have large variance and many outliers.
The proposed method needs only 70 batches to achieve better identification result than the 250 batches of data that traditionally uses random input.
Figure 5 shows the maximum deviation of identification results of the proposed method with designed input signal and the method which uses the white-noise input. 
It shows that when white noise is used as the input, the maximum deviation of the identification result is large, and the convergence speed is slower over time. 
In 100 Monte Carlo experiments, the convergence of maximum deviation of the white noise input is not stable.
Moreover, the maximum deviation is no tendency to converge to zero.
In contrast, the proposed input design method makes the maximum deviation quickly converge, and always maintains a faster convergence trend to zero.

To demonstrate that our algorithm is effective for general linear time-invariant systems, in Figure 6, we randomly generate  three linear systems and apply our algorithm. The model of the first system is
\begin{equation}\nonumber
\begin{small}
\begin{aligned}
&A = \left[\begin{array}{cccc}
-23.00	& -13.25&	-20.20 &	-14.63\\
14.26 &	8.13 &	13.46 &	8.74 \\
8.12 &	4.31  &		5.36  &		6.37 \\
13.85 &	8.51  &		11.28 &		8.69
\end{array}\right],\\
&B =\left[12.72\
-8.14 \
-2.38\
-7.43 \right]^{\top}, \\
&C = [-0.58\	-0.99\	-0.10\ 0.06 ].
\end{aligned}
\end{small}
\end{equation}
The other systems are also randomly generated.
It can be seen from Figure 6 that our algorithm can converge and obtain stable identification results under the random linear systems.

The output signal, the prediction of output and the error of prediction during the identification process are given by Figure 7. 
It shows that the model in this paper accurately predicts the value of $y$, and that the output signal fluctuates within a safe range. 
Since the safety constraint is soft, the output signal violates the constraint in a short time. 
Strict restrictions can be achieved by setting a smaller constraint tolerance.

These simulation results demonstrate the effectiveness of the proposed input design algorithm.

\begin{figure}[t]
  \centering 
  \setlength{\abovecaptionskip}{0.1cm}
    \label{sae1} 
    \includegraphics[width=0.45\textwidth]{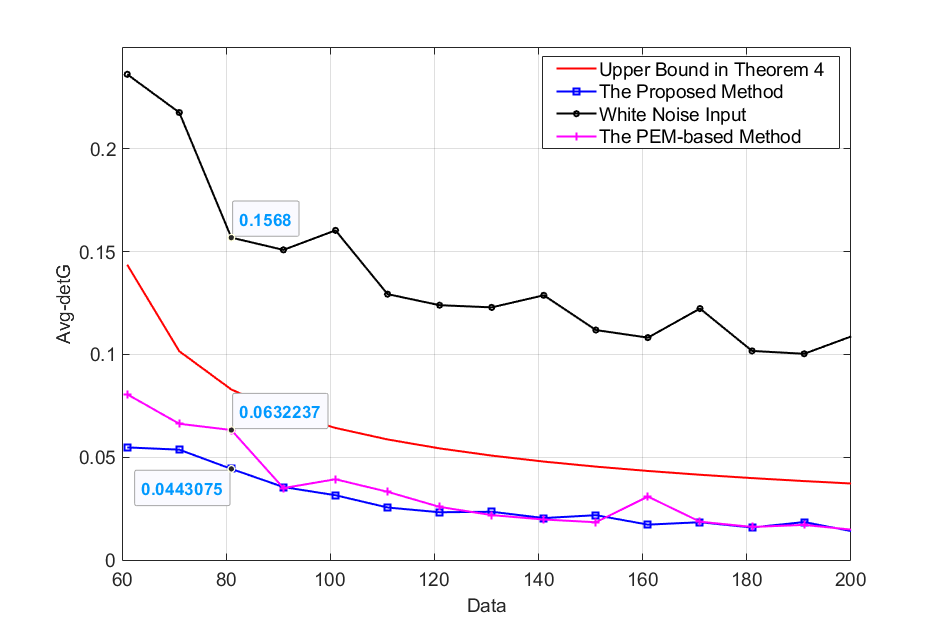} 
  \caption{The average of the error of the identification result ($\|\hat{G}- G^*\|_{\mathrm{F}}$), obtained from 100 Monte Carlo runs with random noise.} 
  \vspace*{-10pt}
\end{figure}

\begin{figure}[t]
  \centering 
  \setlength{\abovecaptionskip}{0.1cm}
    \label{sae2}
    \includegraphics[width=0.45\textwidth]{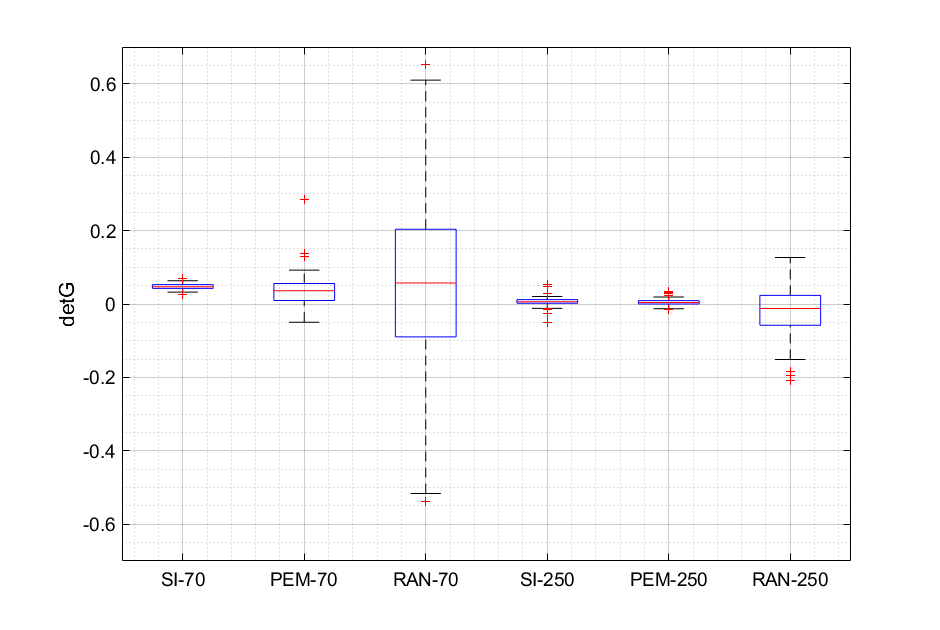} 
  \caption{Box-plot of the distribution of identification error, which compares the distribution of the subspace input design algorithm proposed in this paper (SI), the input design method based on the PEM (PEM) and the random white noise input (RAN) under 70 and 250 sets of data.} 
  \vspace*{-10pt}
\end{figure}

\begin{figure}[t]
  \centering 
  \setlength{\abovecaptionskip}{0.1cm}
    \label{sae3} 
    \includegraphics[width=0.45\textwidth]{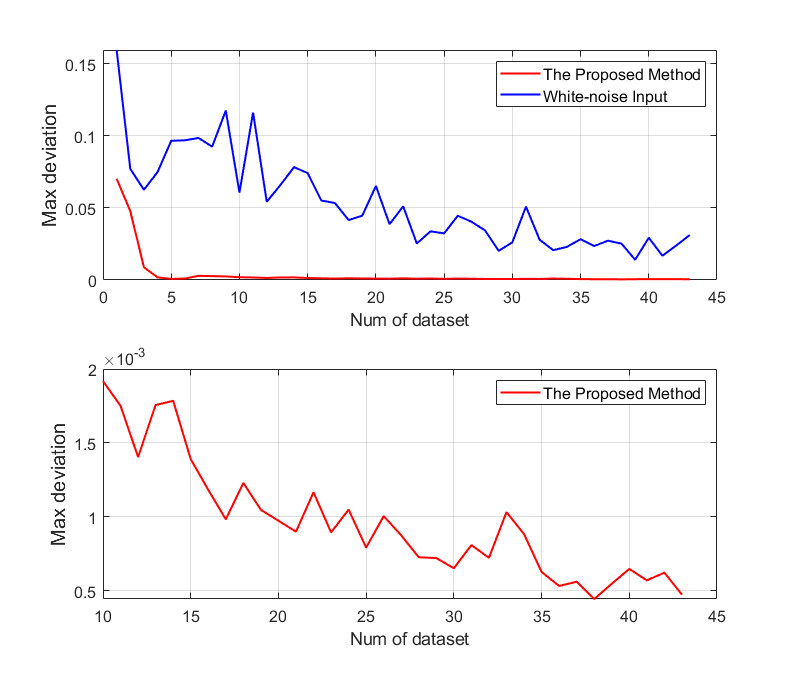} 
  \caption{The maximum deviation of identification results of the proposed method with designed input signal (the red line) and the method which uses the white-noise input (the blue line), obtained from 100 Monte Carlo runs with random noise.
  } 
  \vspace*{-10pt}
\end{figure}

\begin{figure}[t]
  \centering 
  \setlength{\abovecaptionskip}{0.1cm}
    \label{sae5} 
    \includegraphics[width=0.45\textwidth]{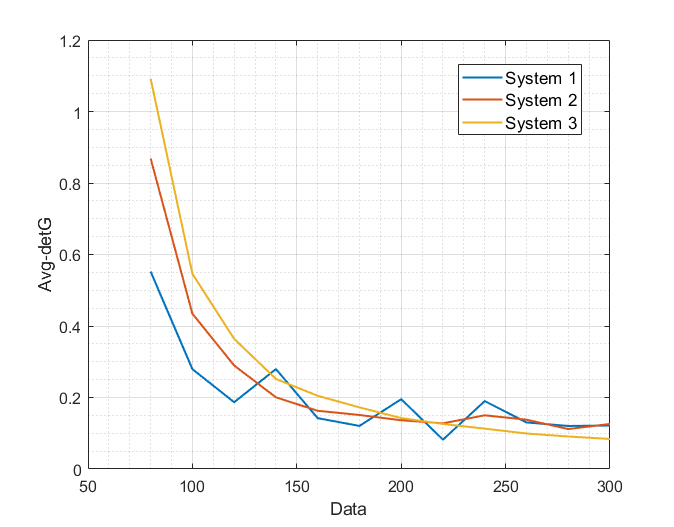} 
  \caption{The average of the identification error of several random systems, obtained from 100 Monte Carlo runs with random noise.} 
  \vspace*{-10pt}
\end{figure}

\begin{figure}[t]
  \centering 
  \setlength{\abovecaptionskip}{0.1cm}
    \label{sae5} 
    \includegraphics[width=0.45\textwidth]{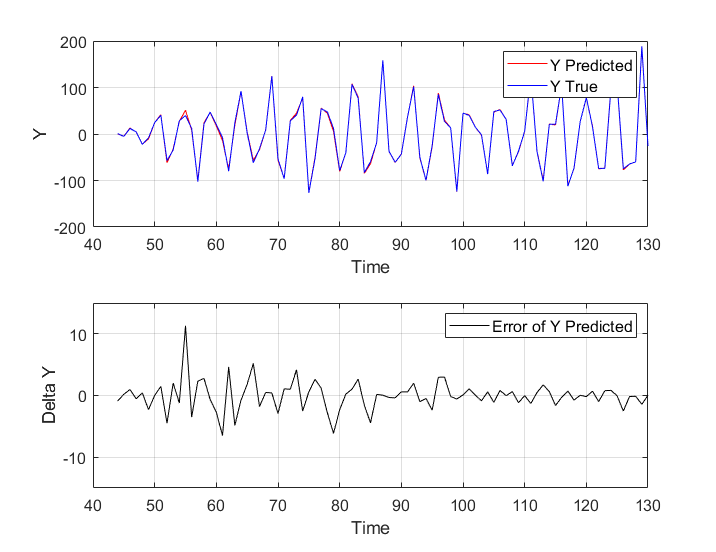} 
  \caption{The output (the blue line) and its prediction (the red line) and the error of the prediction (the black line) during the identification process.} 
  \vspace*{-10pt}
\end{figure}

\section{Conclusion}\label{conclusionfi}
In this paper, we propose an improved subspace method with a closed-form and consistent estimation of the system matrix.
Then, we derive an input design algorithm to deal with the uncertainty of noise in system identification while ensuring safety.
Our research provide a feasible way for the observer to tackle the difficulty of statistic analysis in subspace identification, and to achieve more accurate and more stable identification results.
We investigate the state-space model of the LTI system and identify the Markov parameter matrix via expressing it as an explicit function of input and output. 
We investigate the maximum identification deviation of the identification result.
Finally, an input design algorithm is presented to achieve more accurate and stable identification results via minimizing the identification variance.
Simulation results are provided to illustrate the effectiveness of the proposed method.

Future directions include
i) considering the case where the input signal can only be partly designed;
ii) investigating general models such as nonlinear systems;
iii) extending the application of input design to a locally observable network.

\section*{Appendix}
\subsection{Proof of Lemma 1}

\begin{proof}
First, we prove that the Hankel matrices $\mathcal H_{y}(k;h;s)$ and $\mathcal H_{u}(k;h\!+\!t;s)$ have full row rank with probability 1.

Consider the Hankel matrix $\mathcal H_{y}(k;h;s)$. This matrix has $h$ rows and $s$ columns.
Since $\mathcal H_{y}(k;h;s)$ is formed by vectors from $y(k)$ to $y(k\!+\!h\!+\!s\!-\!2)$ and each $y$ is the output signal with independent random output noise, the matrix $\mathcal H_{y}(k;h;s)$ has $(h+s-1)$ random variables.

Suppose that $\mathcal H_{y}(k;h;s)$ is a singular matrix, then the vectors formed by each row of  $\mathcal H_{y}(k;h;s)$ are linearly related, i.e., there exist $k_1,k_2,\cdots,k_{h-1}$, such that
\begin{equation}\label{Hsigular}
\begin{small}
\begin{aligned}
     &y(k) = k_1 y(k\!+\!1) \!+\! k_2 y(k\!+\!2) \!+\! \cdots \!+\!  k_{h-1} y(k\!+\!h\!-\!1)\\
     &y(k\!+\!1) = k_1 y(k\!+\!2) \!+\! k_2 y(k\!+\!3) \!+\! \cdots \!+\!  k_{h-1} y(k\!+\!h)    \\
    &\cdots    \\
     &y(k\!+\!s\!-\!1) = k_1 y(k\!+\!s) + \cdots  +  k_{h-1} y(k\!+\!h\!+\!s\!-\!2). 
\end{aligned}
\end{small}
\end{equation}
Note that in \eqref{Hsigular}, the formula in the next line can be directly substituted into the previous line. After $(s-1)$ times of substitution, \eqref{Hsigular} is equivalently transformed into
\begin{equation}\label{Hifsingular}
\begin{small}
\begin{aligned}
    &y(k) \!=\!  f_1 y(k\!+\!s) \!+\! f_2 y(k\!+\!h\!+\!1) + \cdots +  f_{h-1} y(k\!+\!h\!+\!s\!-\!2),
\end{aligned}
\end{small}
\end{equation}
where $f_1,f_2,\cdots,f_{h-1}$, are constants related only to $k_1,k_2,\cdots,k_{h-1}$. It can be observed that the random variables in \eqref{Hifsingular} are $k_1,k_2,\cdots,k_{h-1}$  and $y(k+s),\cdots, y(k+h+s-2)$, a total of $(2h-2)$ random variables.

Considering that $s \geqslant h $, the linear space where $\mathcal H_{y}(k;h;s)$ is a singular matrix is a zero test set relative to the matrix space formed by all $\mathcal H_{y}(k;h;s)$. 
Thus, Lemma 1 is proved.

\end{proof}

\subsection{Proof of Theorem \ref{theMarkov}}
\begin{proof}
First, we derive the relationships between $x$ and $y,u$.
Expand $y(k)$ recursively until $y(k+h-1)$ by \eqref{sysmodel0}, we can obtain that
\begin{equation}\label{yocx}
Y(k;h) = O_c(h)x(k) + T(h) U(k;h).
\end{equation}
From Assumption 1, one infers that $\operatorname{rank}\left(O_c(h)\right)=m$ when $h \geqslant m$. Hence, there exists a matrix $ O^{\mathrm{L}}_c(h)$, s.t.,
\begin{equation}\label{ocl}
O^{\mathrm{L}}_c(h) O_c(h)=I.
\end{equation}
From \eqref{yocx} and \eqref{ocl}, we obtain 
\begin{equation}\label{xlr}
x(k)=O^{\mathrm{L}}_c(h) Y(k;h)-O^{\mathrm{L}}_c(h) T(h) U(k;h),
\end{equation}
i.e., $x(k)$ is a linear function of $Y(k;h)$ and $U(k;h)$.

Similarly, expand $x(k)$ recursively until $x(k+h-1)$ based on \eqref{sysmodel0}, we obtain that 
\begin{equation}\label{xhlr}
x(k+h)=A^{h} x(k) + O_b(h) U(k; h).
\end{equation}
It follows from \eqref{xlr} and \eqref{xhlr} that $x(k+h)$ is also a linear function of $ Y(k;h)$ and $U(k;h)$.
Thus, there exist constant matrices $F_1, F_2$, s.t.,
\begin{equation}\label{elx}
x(k+h)=F_1 Y(k ;h)+F_2 U(k;h).
\end{equation}
Eliminate $x$ by \eqref{elx} in \eqref{sysmodel0}, one follows that
\begin{equation}\nonumber
\begin{aligned}
y(k\!+\!h\!+\!t) = C A^{t}[F_1, F_2]\left[\!\begin{array}{c}
Y(k;h)\\
U(k;h)
\end{array}\!\right] + G(t) U(k\!+\!h;t).
\end{aligned}
\end{equation}
Denote $R=C A^{t}[F_1 \ F_2]$, the above equation is rewritten as 
\begin{equation}
y(k+h+t)=\left[R, G(t)\right]\left[\!\begin{array}{c}
Y(k;h) \\
U(k;h+t)
\end{array}\!\right].
\end{equation}
Considering the time from $(k\!+\!h\!+\!t)$ to $(k\!+\!h\!+\!t\!+\!s\!-\!1)$ for the output variable $y$, we obtain that
\begin{equation}\label{RG}
\textsf{Y}(k\!+\!h\!+\!t;s) =\left[R, G(t)\right] \mathcal L[y,u].
\end{equation}
By Lemma 1, it notes that $\mathcal L[y,u]$ is nonsingular with probability 1, i.e., $\mathcal L^{-1}[y,u]$ always exists.
Then, one infers from \eqref{RG} that 
\[\left[R, G(t)\right]=\textsf{Y}(k\!+\!h\!+\!t;s)\mathcal L^{-1}[y,u],\]
which implies that $G(t)$ satisfies \eqref{noisefreeidentify}.
Theorem \ref{theMarkov} is proved.
\end{proof}

\subsection{Proof of Theorem \ref{theTransformation}}
\begin{proof}
From \eqref{sysmodel0}, we obtain that
\begin{equation}\label{xkmv}
x(k\!+\!m\!-\!1) = A^{m\!-\!1}x(k) + O_b(m)U(k;m) + V(k;m).
\end{equation}
From \eqref{sysmodel} we have
\begin{equation}\label{xkme}
x(k\!+\!m\!-\!1) \!=\! A^{m\!-\!1}x(k) \!+\! O_b(m)U(k;m) \!+\! O_b(m)E(k;m).
\end{equation}
By Assumption 1, the system is controllable, i.e., 
\begin{equation}\nonumber
    \operatorname{rank} O_b(m) = m.
\end{equation}
Comparing \eqref{xkmv} and \eqref{xkme}, if we let
\begin{equation}\label{ek}
    E(k;m) =  O_b^{\mathrm{R}}(m) V(k;m),
\end{equation}
then $e$ in \eqref{sysmodel} is equivalent to $v$ in \eqref{sysmodel0}. 
Next, we prove that $e(k)$ is bounded and zero-mean.
From \eqref{ek}, we have
\begin{equation}\nonumber
\|E(k;m)\|_{\infty}  \leqslant   \left\|O_b^{\mathrm{R}}(m)\right\|_{\infty} \left\|V(k;m)\right\|_{\infty} .
\end{equation}
By Assumption 3, $v$ is bounded and zero-mean. Note that $A, O_b, m$ are constants or constant matrices, then $e(k)$ is bounded and zero-mean. Hence, Theorem \ref{theTransformation} is proved.
\end{proof}

\subsection{Proof of Theorem \ref{theorem5}}
\begin{proof}
Since $\mathcal{U}_{i+1} \subseteq \mathcal{U}_{i}$, $\mathcal{U}_{i+1} \neq \varnothing$ only if $\mathcal{U}_i \neq \varnothing$.

Then, we prove the sufficiency in Theorem \ref{theorem5}, i.e., $\mathcal{U}_{i+1} \neq \varnothing \ \text{if} \  \mathcal{U}_i \neq \varnothing$. 

Since $\mathcal{U}_i \neq \varnothing$, there exists a sequence of $U(k;h) \in \mathcal{U}_i$ such that $\|Y(k;h+i)\|_\infty \leqslant y_\mathrm{M}$.
It follows from the system model \eqref{sysmodel} that $\|Y(k;h+i)\|_\infty \leqslant y_\mathrm{M}$ is equivalent to
\begin{equation}
   \underline{x} \leqslant   X(k;h+i) \leqslant \overline{x},
\end{equation}
where $\underline{x}$ and $ \overline{x}$ are constants determined by $C, y_\mathrm{M}$ and the bound of noise $w$.


Expanding $x$ by the system model \eqref{sysmodel}, we have
\begin{equation}\label{xexpand}
\begin{aligned}
    x(k+h+i) =& A^{h+i} x(k) + O_b U(k+1;h-1+i) \\
    &+ A^{h-1}Bu(k) + T_e E(k;h+i),
\end{aligned}
\end{equation}
where $T_e$ is a constant matrix related to $A, B, h$ and the system order. 


Since the system is controllable, the matrix $O_b$ has full row rank when $h > m$, which means there exists a generalized right inverse matrix of $O_b$, denoted by $O_b^\mathrm{R}$, s.t.,
\begin{equation}\nonumber
O_b O_b^\mathrm{R}=I.
\end{equation}
Moreover, the vectors formed by the columns of $O_b^\mathrm{R}$ are linearly independent.
It follows that the existence of $U(k;h+i)$ is equivalent to the existence of $O_b U(k;h+i)$. 

By the existence of $U(k;h+i) \in \mathcal{U}_i$ such that $\|Y(k;h+i)\|_\infty \leqslant y_\mathrm{M}$ and \eqref{xexpand}, we have that there exists $O_b U(k;h+i-1)$ such that
\begin{equation}\label{xexist}
    \underline{x} \leqslant  A^{h+i-1}x(k)  + O_b U(k;h+i-1) \leqslant \overline{x}.
\end{equation}
Note that the influence of $E(k)$ has been considered in the constants $\underline{x}$ and $ \overline{x}$. 

Similarly, $\mathcal{U}_{i+1} \neq \varnothing$ is equivalent to the existence of $O_b U(k\!+\!1;h\!+\!i\!-\!1)$ such that
\begin{equation}\label{xprove}
   \underline{x} \leqslant A^{h\!+\!i} x(k) + O_b U(k\!+\!1;h\!+\!i\!-\!1) + A^{h\!+\!i\!-\!1}Bu(k) \leqslant \overline{x}
\end{equation}
Comparing \eqref{xexist} and \eqref{xprove}, we have the sufficient condition for \eqref{xprove}, i.e.,
\begin{equation}\label{compare1}
    \overline{x} - A^{h-1}x(k) \leqslant \min \left( \overline{x} - A^h x(k) - A^{h-1} B u(k) \right),
\end{equation} and
\begin{equation}\label{compare2}
        \underline{x} - A^{h-1}x(k) \geqslant \max \left( \underline{x} - A^h x(k) - A^{h-1} B u(k) \right).
\end{equation}
Note that $x(k\!+\!1) = Ax(k) + Bu(k) + v(k)$, \eqref{compare1} and \eqref{compare2} are equivalent to
\begin{equation}\label{theorem5fi}
  \min x(k\!+\!1) \leqslant x(k) - v(k) \leqslant \max x(k+1).
\end{equation}
By Assumption 1, the system is controllable, thus \eqref{theorem5fi} holds. Hence, we prove Theorem \ref{theorem5}.
\end{proof}

\subsection{Proof of Lemma 2}
\begin{proof}
Denote 
\begin{equation}\nonumber
\begin{aligned}
    D(u,\phi) = \left( F(y,u,\phi)u_2 + c(y,u,\phi) \right)^2,
\end{aligned}
\end{equation}
where $\phi = \{e, w\}$.

Denoting function \eqref{costu2} by $D_M(u)$,  we have 
\begin{equation}\nonumber
\begin{aligned}
    D_M(u) = D(u,\phi_0), 
\end{aligned}
\end{equation}
where $\phi_0 = \arg \max_{\phi}D(u,\phi)$.

$D_M(u)$ is convex related to $u$ if and only if
\begin{equation}\nonumber
\begin{aligned}
    D(\theta u_1 + (1-\theta) u_2, \phi_0) \leqslant & \theta D(u_1,\phi_1)  + (1-\theta) D(u_2,\phi_2),
\end{aligned}
\end{equation}
where $\theta \in [0,1]$ and  $\phi_0 = \arg \max_{\phi}D(\theta u_1 + (1-\theta) u_2, \phi)$, $\phi_1 = \arg \max_{\phi} D(u_1,\phi)$, $\phi_2 = \arg \max_{\phi} D(u_2,\phi)$.

Obviously, for all fixed $\phi$, $D(u,\phi)$ is a convex function related to $u$.
Hence, we have
\begin{equation}\nonumber
\begin{aligned}
    D(\theta u_1 + (1-\theta) u_2, \phi_0) \leqslant & \theta D(u_1,\phi_0)  + (1-\theta) D(u_2,\phi_0).
\end{aligned}
\end{equation}
Since $\phi_1 = \arg \max_{\phi} D(u_1,\phi)$, it follows that
\begin{equation}\nonumber
    D(u_1,\phi_0) \leqslant D(u_1,\phi_1).
\end{equation}
Similarly, we have 
\begin{equation}\nonumber
    D(u_2,\phi_0) \leqslant D(u_2,\phi_2).
\end{equation}
Therefore,
\begin{equation}\nonumber
\begin{aligned}
    & D(\theta u_1 + (1-\theta) u_2, \phi_0) \\ \leqslant & \theta D(u_1,\phi_0)  + (1-\theta) D(u_2,\phi_0)\\
    \leqslant & \theta D(u_1,\phi_1)  + (1-\theta) D(u_2,\phi_2).
\end{aligned}
\end{equation}
Hence, Lemma 2 is proved.

\end{proof}

\subsection{Proof of Theorem \ref{theorem7}}
\begin{proof}
Define $G_k$ as the estimation of $G$ under the $k$-th batch of the data $\{y_k, u_k\}$, i.e.,
\begin{equation}
G_k(t)= \textsf{Y}_k(d;s)\mathcal L^{-1}[y_k,u_k]\left[\!\begin{array}{l}
0 \\
I_{r}
\end{array}\!\right], \quad \forall k.
\end{equation}
Therefore,
\begin{equation}
G^{*}(t)=\frac{1}{N} \sum_{k=1}^{N} G_k^*(t), \quad
\hat{G}(t)=\frac{1}{N} \sum_{k=1}^{N} G_k(t). 
\end{equation}
Denote $\bm{w} = [w_1,w_2,\cdots,w_N]$, $\bm{\alpha} = [\alpha_1;\alpha_2;\cdots,\alpha_N]$, $\bm{y} = [y_1,y_2,\cdots,y_N]$ and $\bm{\beta} = [\beta_1;\beta_2;\cdots,\beta_N]$.
Let \[\Delta G=\left\|\hat{G}(t)-G^{*}(t)\right\|_{\mathrm{F}}^{2}\] 
Then, similar to \eqref{J}, we have
\begin{equation}\label{detG}
\begin{aligned}
\Delta G = & \frac{1}{N} \left\| \bm{w}\bm{\alpha} + \bm{ y \beta} \right\|_{\mathrm{F}}^{2} \\
 \leqslant & \frac{1}{N}\left(\left\| \bm{w}\bm{\alpha}  \right\|_{\mathrm{F}}^{2} + \left\| \bm{ y \beta} \right\|_{\mathrm{F}}^{2}\right).
\end{aligned}
\end{equation}
By Assumptions 2 and 3, $w_i,\alpha_i,\beta_i,y_i$ are independent of each other and bounded, then $\lim\limits_{N\to\infty} \Delta G= 0$.

When the noise obeys Gaussian distribution, by Assumption 3 and \eqref{paB}, $\beta$ obeys Gaussian distribution. 
According to the fact that $\alpha, \beta$ and $y$ are sequences with definite upper bounds and the Frobenius norm is Lipschitz continuous, \eqref{detG} is applicable to the case of Lemma A.1 in \cite{029}.
Hence,
\begin{equation}
\begin{aligned}
\Delta G \leqslant \frac{1}{N}\left( \delta^2\alpha_\mathrm{M}^2 + y_\mathrm{M}^2\alpha_\mathrm{M}^2 \right)(2\sqrt{n+s} + \tau)^2,
\end{aligned}
\end{equation}
with probability at least $1 - 2 exp(-\tau^2/2)$. Considering that $\alpha_\mathrm{M}$, $\delta$, $s$, $n$, $y_{\mathrm{M}}$ are all constants, the proof is completed.
\end{proof}

\subsection{Proof of Theorem \ref{theorem8}}
\begin{proof}
Define the maximum deviation at $N$ iteration, i.e., at the time of $N$ batches data are used in identification as $D_N$.
Define the maximum deviation of estimation of $G$ at $N$ iteration as $\delta D_N$.

It follows from  \eqref{costab} that
\begin{equation}\nonumber
\begin{aligned}
    \delta D_N^2 \leqslant \min _{u}  \sum_{j=r+1}^{s}&\left(\max _{e,w}\sum_{i=r+1}^{s}\beta(i,j)y(d+i-1) \right. 
    \\ &\left.+\max _{e,w}\sum_{i=r+1}^{s}\alpha(i,j)w(d+i-1) \right)^2.
\end{aligned}
\end{equation}
Since $\| \bm{w}\|_{\infty} \leqslant 2\delta$, we have
\begin{equation}\nonumber
\begin{aligned}
    \delta D_N^2 \leqslant  & \min _{u}  \sum_{j=r+1}^{s} \left(\max _{e,w}\sum_{i=r+1}^{s}\beta(i,j)y_\mathrm{M}
    +2\delta \sum_{i=r+1}^{s}\alpha(i,j)\right)^2\\
    \leqslant & \min _{u}  \sum_{j=r+1}^{s} \left(\sum_{i=r+1}^{s}\beta(i,j)_{\max} y_\mathrm{M}
    +2\delta \sum_{i=r+1}^{s}\alpha(i,j)\right)^2.
\end{aligned}
\end{equation}

Then, we analyze the range of $\beta(i,j)$.
Since
\begin{equation}\nonumber
    \beta= \mathcal L^{-1}[y^*\!+\!w_i,u^*\!+\!e_i] - \mathcal L^{-1}[y^*\!+\!w_j,u^*\!+\!e_j],
\end{equation}
and by Assumption 3, $e,w$ is relatively small. Hence, we have
\begin{equation}\nonumber
    \beta = \mathcal L^{-1}[y^*,u^*] (e_i-e_j,w_i-w_j) \mathcal L^{-1}[y^*,u^*].
\end{equation}
Hence, 
\begin{equation}\nonumber
    \beta(i,j)_{\max} \leqslant 2\delta \alpha_{\mathrm{M}}^2.
\end{equation}

Therefore, we have
\begin{equation}\label{deltaD}
\begin{aligned}
   \delta D_N^2 \leqslant & \sum_{j=r+1}^{s} \left(\sum_{i=r+1}^{s} 2\delta \alpha_{\mathrm{M}}^2 y_\mathrm{M}
    +2\delta \sum_{i=r+1}^{s}\alpha_{\mathrm{M}}\right)^2 \\
    = & 4\delta^2(s-r)^3\alpha_{\mathrm{M}}^2(1+\alpha_{\mathrm{M}}y_\mathrm{M})^2
\end{aligned}
\end{equation}

From the definition of $D_N$, we have
\begin{equation}\nonumber
\begin{aligned}
    D_N =\| \frac{1}{N}(\sum_{i = 1}^{N-1}G_{i} + G_{N_1}) - \frac{1}{N}(\sum_{i = 1}^{N-1}G_{i} + G_{N_2})\|_\mathrm{F},
\end{aligned}
\end{equation}
where $G_{i}, i=1,2,\cdots,N-1$ are the identification result at $i$ iteration, $G_{N_1}$ and $G_{N_2}$ are the two identification results of $G$ with the largest difference at $N$ iteration.
Hence, 
\begin{equation}
\begin{aligned}
    D_N =\frac{1}{N}\delta D_N\leqslant \frac{1}{N} 2\delta(s-r)^{\frac{3}{2}}\alpha_{\mathrm{M}}(1+\alpha_{\mathrm{M}}y_\mathrm{M}).
\end{aligned}
\end{equation}
Since $\alpha_{\mathrm{M}}, y_\mathrm{M}, \delta, s, r$ are constants, Theorem \ref{theorem8} is proved.
\end{proof}

For the case where white noise is used as the system identification input, we only have the probability convergence of the maximum identification deviation. Furthermore, the the convergence speed is slow compared to the designed $u$ in this paper. We analyze the case of white noise input as follows.

Define the maximum deviation at $N$ iteration of identification using white noise input as $\tilde{D}_N$ and the maximum deviation of estimation of $G$ using white noise input at $N$ iteration as $\delta \tilde{D}_N$.
Similar to \eqref{deltaD}, we have
\begin{equation}\label{deltawhitenoiseD}
\begin{aligned}
   \delta \tilde{D}_N^2 \leqslant & \sum_{j=r+1}^{s} \left(\sum_{i=r+1}^{s} 2\delta \alpha(i,j)^2 y_\mathrm{M}
    +2\delta \sum_{i=r+1}^{s}\alpha(i,j)\right)^2. 
\end{aligned}
\end{equation}

However, since the range of $\alpha(i,j)$ cannot be determined, we cannot obtain the maximum value of $\delta \tilde{D}_N$.

Since $\alpha(i,j)$ at $N$ iteration is determined by the input signal $u$ at $N$ iteration.
Define the change of $u$ as $\Delta u$. By the definition of the identification function of $G$, we obtain that $u$ is in the $s$-th row and $s$-th column of the matrix  $\mathcal L[y,u]$. 
Define the change of $\alpha(i,j)$ as $\Delta \alpha(i,j)$ when $u$ changes. We have the following equations
\begin{equation}\label{changeofalpha}
    \Delta \alpha(i,j) = -\frac{\alpha(i,s)\alpha(s,j)\Delta u}{1+\alpha(s,s)\Delta u}=-\frac{\alpha(i,s)\alpha(s,j)}{\frac{1}{\Delta u}+\alpha(s,s)}.
\end{equation}
Since $u$ is a white noise input, $\Delta u$ is also a white noise input. Although $u$ and $\Delta u$ are bounded, we cannot ensure that $\Delta \alpha(i,j)$ is bounded, i.e., when $(\frac{1}{\Delta u}+\alpha(s,s))$ is close to $0$, there is no maximum value for $\alpha(i,j)$.

Considering that $u$ obeys a distribution, $\Delta \alpha(i,j)$ also obeys a distribution. Therefore, we can get the conclusion that the bound of $\Delta \alpha(i,j)$ probability converges according to the distribution of $u$.

For example, when $\Delta u$ obeys a distribution which makes $\Delta \alpha(i,j)^2$ obeys Gaussian distribution. By  Lemma A.1 in \cite{029}, we have the following convergence rate at a probability at least $1-2e^{-t^2/2}$.
\begin{equation}
\begin{aligned}
    D_N = & \frac{1}{N}\delta D_N = \|[\frac{1}{N}, \cdots, \frac{1}{N}][\tilde{D}_N,\tilde{D}_N,\cdots,\tilde{D}_N]^\mathrm{T}\|_\mathrm{F}\\ \leqslant & \frac{1}{N}\sqrt{2(N+1)+t}.
\end{aligned}
\end{equation}
In this example, the convergence rate is $\mathcal{O}(1/\sqrt{N})$.

\balance
\bibliographystyle{IEEEtran}

\bibliography{ref} 

\begin{IEEEbiographynophoto}{Xiangyu Mao}
(S'21) received the B.E. degree in Department of Automation from Tsinghua University, Beijing, China, in 2020. 
He is currently working toward the Ph.D. degree with the Department of Automation, Shanghai Jiaotong University, Shanghai, China. 
He is a member of Intelligent of Wireless Networking and Cooperative Control group. 
His research interests include system identification, networked systems and  distributed  optimization in multi-agent networks. 
\end{IEEEbiographynophoto}

\begin{IEEEbiographynophoto}{Jianping He} 
(SM'19) is currently an associate professor in the Department of Automation at Shanghai Jiao Tong University. He received the Ph.D. degree in control science and engineering from Zhejiang University, Hangzhou, China, in 2013, and had been a research fellow in the Department of Electrical and Computer Engineering at University of Victoria, Canada, from Dec. 2013 to Mar. 2017. His research interests mainly include the distributed learning, control and optimization, security and privacy in network systems.

Dr. He serves as an Associate Editor for IEEE Open Journal of Vehicular Technology and KSII Trans. Internet and Information Systems. He was also a Guest Editor of IEEE TAC, IEEE TII, International Journal of Robust and Nonlinear Control, etc. He was the winner of Outstanding Thesis Award, Chinese Association of Automation, 2015. He received the best paper award from IEEE WCSP'17, the best conference paper award from IEEE PESGM'17, the finalist best student paper award from IEEE ICCA'17, and the finalist best conference paper award from IEEE VTC'20-Fall.
\end{IEEEbiographynophoto}

\begin{IEEEbiographynophoto}{Chengcheng Zhao}
 (M'18) received her PhD degree in control science and engineering from Zhejiang University, Hangzhou, China, in 2018. She is currently an associate researcher in the College of Control Science and Engineering, Zhejiang University. She worked as a postdoctoral fellow in the College of Control Science and Engineering, Zhejiang University from 2018 to 2021, and worked as a postdoctoral fellow at the ECE department, University of Victoria, from 2019 to 2020. Her research interests include consensus and distributed optimization, distributed energy management in smart grids, vehicle platoon, and security and privacy in network systems. She received IEEE PESGM 2017 best conference papers award, and one of her papers was shortlisted in IEEE ICCA 2017 best student paper award finalist. She serves as an editor of Wireless Networks since 2021.
\end{IEEEbiographynophoto}

\end{document}